
\documentclass[letterpaper, 10 pt, conference]{ieeeconf} 

\IEEEoverridecommandlockouts
\overrideIEEEmargins

\makeatletter

\def\ps@IEEEtitlepagestyle{%
    \def\@oddfoot{\mycopyrightnotice}%
    \def\@evenfoot{}%
}
\def\mycopyrightnotice{%
    {\footnotesize  978-1-4799-6773-5/14/\$31.00 \textcopyright2017 Crown\hfill}
    \gdef\mycopyrightnotice{}
}

\usepackage{color}
\usepackage{enumerate,graphicx,epstopdf}
\usepackage{amsmath,amssymb,bm}
\usepackage{amsthm}
\usepackage{cite}
\usepackage{mathtools}
\usepackage{array}
\usepackage{algorithm,algorithmic}
\usepackage{color}
\usepackage{booktabs}
\usepackage{url}

\bibliographystyle{ieeetr}

\newtheorem{ass}{Assumption}
\newtheorem{rmk}{Remark}

\newtheorem{dfn}{Definition}
\newtheorem{lmm}{Lemma}

\newtheorem{thm}{Theorem}

\newcommand{\reals}{\mathbb{R}}

\newcommand\eps[0]{\varepsilon}
\newcommand\Kinf[0]{\mathcal{K}_\infty}
\newcommand\KL[0]{\mathcal{KL}}
\newcommand\K[0]{\mathcal{K}}
\newcommand\limsupk[0]{\underset{k\to\infty}{\overline{\lim}}}

\makeatletter

\makeatother

\makeatletter
\newcommand*\titleheader[1]{\gdef\@titleheader{#1}}
\AtBeginDocument{%
  \let\st@red@title\@title%
  \def\@title{%
    \bgroup\normalfont\large\centering\@titleheader\par\egroup
    \vskip1.5em\st@red@title}
}
\makeatother

\title{A Semismooth Predictor Corrector Method for \\Suboptimal Model Predictive Control
\thanks{D. Liao-McPherson and I.V. Kolmanovsky are with the University of Michigan, Ann Arbor. Email:\texttt{\{dliaomcp,ilya\}@umich.edu}. M. M. Nicotra is with the University of Colorado Boulder. Email:\texttt{\{marco.nicotra@colorado.edu\}}. This research is supported by the National Science Foundation Award Number CMMI 1562209.}}

\titleheader{This is the accepted version of the paper:\\
D. Liao-McPherson, M. Nicotra, and I. Kolmanovsky, ``A Semismooth Predictor Corrector Method for Suboptimal Model Predictive Control'', European Control Conference, Naples, Italy, June 25-28 2019}
\author{Dominic Liao-McPherson, Marco M. Nicotra, Ilya V. Kolmanovsky}
\begin{document}

\maketitle

\begin{abstract}
Suboptimal model predictive control is a technique that can reduce the computational cost of model predictive control (MPC) by exploiting its robustness to incomplete optimization. Instead of solving the optimal control problem exactly, this method maintains an estimate of the optimal solution and updates it at each sampling instance. The resulting controller can be viewed as a dynamic compensator which runs in parallel with the plant. This paper explores the use of the semismooth predictor-corrector method to implement suboptimal MPC. The dynamic interconnection of the combined plant-optimizer system is studied using the input-to-state stability framework and sufficient conditions for closed-loop asymptotic stability and constraint enforcement are derived using small gain arguments. Numerical simulations demonstrate the efficacy of the scheme.
\end{abstract}


\section{Introduction}

In Model predictive control (MPC) \cite{grune2017nonlinear,rawlings2009model} a control law is defined by the solution of a finite horizon optimal control problem (OCP). Although MPC can systematically handle nonlinearities and constraints, it can be difficult to implement in applications where computing power is insufficient for solving a constrained non-convex OCP at each sampling instance. Developments in numerical solution methods, especially for linear-quadratic MPC, have enabled the application of MPC to a wide variety of systems, see e.g., \cite[Section 2.6]{mayne2014model} and references therein. However, the application of MPC to systems requiring fast sampling rates remains challenging.

Suboptimal MPC (SOMPC) is an approach for reducing the computational cost of MPC controllers. In SOMPC, instead of solving the OCP to a high precision at each sampling instance, we maintain a guess of the optimal solution and improve it each sampling instance, e.g., by shifting the control sequence or performing one of more iterations of an optimization algorithm. The difference between an ideal model predictive controller and an suboptimal model predictive controller is illustrated in Figure~\ref{fig:subopt_mpc_fig}. The ideal MPC law is a static function, while the SOMPC law is a dynamic compensator which maintains an estimate of the optimal solution of the OCP as its internal state.

The paper\cite{scokaert1999suboptimal} established that, in the presence of a suitable terminal set, any feasible solution of the OCP is stabilizing. The robustness properties of SOMPC were studied in \cite{ALLAN201768,pannocchia2011conditions} which established sufficient conditions on the warmstart to ensure stability of the closed-loop system. 

Stability of SOMPC without any terminal conditions or constraints was studied in \cite{grune2010analysis}. Continuous time SOMPC schemes using gradient type optimization methods were proposed in \cite{graichen2010stability,liao2018embedding}, which also derive sufficient conditions for the stability of the combined system based on the convergence rate of the underlying optimization method. Discrete time gradient based schemes are considered in \cite{graichen2012fixed,steinboeck2017design}. The real-time iteration scheme \cite{diehl2005real} is a well known SOMPC strategy for NMPC wherein a single quadratic program is solved per timestep. Sufficient conditions for stability of the combined plant-optimizer system were established in \cite{diehl2005nominal} in the absence of inequality constraints.

In many cases the generation of points which satisfy the optimality conditions of an OCP can be cast as a parameterized rootfinding problem. This is the approach taken in \cite{paternain2018prediction}, which considers the unconstrained case and \cite{zavala2009advanced} which softens constraints with barriers leading to a smooth rootfinding problem. Both paper consider the robustness of the closed-loop system to disturbances caused by suboptimality; however, the treatment of the optimizer itself as a dynamic system in the loop with the plant was not pursued.

In \cite{liaomcpherson2018} we proposed the semismooth predictor-corrector (SSPC) method which generates solutions of the OCP by tracking the roots of a parameterized nonsmooth rootfinding problem. In this work we apply SSPC to SOMPC, and derive sufficient conditions under which the combined system is asymptotically stable using input-to-state stability (ISS) \cite{jiang2001input} and small gain arguments\cite{jiang2004nonlinear}. The performance of the method is illustrated using numerical simulations.

The contributions of this paper are as follows. First, the SSPC method exhibits second order convergence properties at the cost of a single linear system solve per iteration. This compares favorably with gradient methods, which display first order convergence, and SQP type methods, which obtain second order convergence by solving quadratic programs. In addition, compared to existing work for second order SOMPC methods\cite{diehl2005nominal}, we relax the need for a terminal equality constraint, and consider inequality constraints. Third, the proposed stability proof forgoes the rather restrictive requirement of a monotonically decreasing cost function. This is done by using small-gain arguments to determine under what conditions the interconnected plant-optimizer system is contractive. Finally, we establish sufficient conditions for constraint satisfaction.



\begin{figure}
	\centering
	\includegraphics[width=0.95\columnwidth]{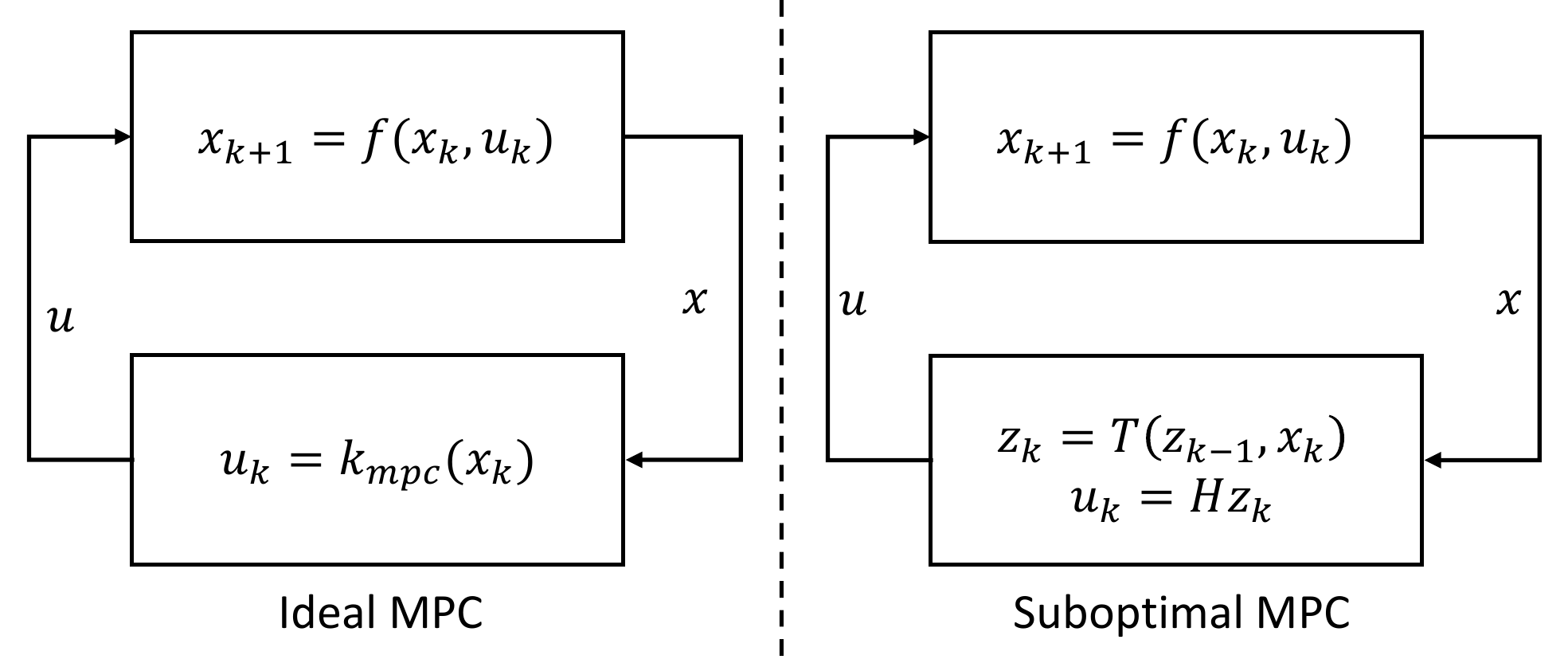}
	\caption{A comparison of suboptimal MPC vs. ideal MPC. Ideal MPC is an implicitly defined static feedback law. Suboptimal MPC can be seen as a dynamic compensator; the current guess of the solution of the OCP is its internal state and an iterative optimization method defines its dynamics.}
	\label{fig:subopt_mpc_fig}
\end{figure}
\section{Problem setting and Control Strategy}
Consider the following discrete-time system,
\begin{equation} \label{eq:dyn_system}
	x_{k+1} = f_d(x_k,u_k),
\end{equation}
where the state and control input are denoted by $x \in X \subset \reals^{n_x}$, and $u\in U \subset \reals^{n_u}$ respectively. 
\begin{ass} \label{ass:ss_properties}
The system satisfies $f_d(0,0) = 0$ and the sets $X$ and $U$ are compact and contain the origin.
\end{ass}

We seek to control \eqref{eq:dyn_system} using MPC, which solves an optimal control problem (OCP) of the form
\begin{subequations} \label{eq:OCP}
\begin{align}
\underset{\xi,u}{\mathrm{min.}}~~&J(\xi,u) = V_f(\xi_N) + \sum_{i = 0}^{N-1} l(\xi_i,u_i),\\
\mathrm{s.t.}~~ &\xi_{i+1} = f_d(\xi_i,u_i), \quad i= 1,\ldots, N-1,\\
& \xi_0 = x, ~~c_N(\xi_N) \leq 0, \label{eq:OCP_param}\\
&c(\xi_i,u_i) \leq 0,  \quad i = 0, \ldots N-1,
\end{align}
\end{subequations}
where $l:\reals^{n_x} \times \reals^{n_u} \to \reals$, $V_f:\reals^{n_x} \to \reals$, $c: \reals^{n_x} \times \reals^{n_u} \to \reals^{n_c}$, $c_N: \reals^{n_x} \to \reals^{n_{cf}}$, at each sampling instance and applies the first element of the solution sequence, $u_0$, to the system. We can then define the ideal MPC feedback law as $k_{mpc}(x) = u_0^*$, where $(\xi^*,u^*)$ is a global minimizer of \eqref{eq:OCP}. Let
\begin{equation}
 	\Gamma = \{x \in X~|~ \text{\eqref{eq:OCP} is feasible}\}
\end{equation} denote the feasible set of \eqref{eq:OCP}. The resulting closed-loop system is
\begin{equation} \label{eq:dyn_cl}
	x_{k+1} = f(x_k,d_k) = f_d(x_k,k_{mpc}(x_k)+d_k),
\end{equation}
where $d_k$ is a disturbance that represents suboptimality. We impose the following conditions on the OCP. The first ensures the existence of second derivatives used by the Newton-type optimization method described in Section~\ref{ss:sspc}, and the second guarantees closed-loop stability of the nominal system (see Theorem~\ref{rmk:MPC_stab}).
\begin{ass} \label{ass:ocp_continuity}
All functions in \eqref{eq:OCP} are $\mathcal{C}^2$ in their arguments.
\end{ass}
\begin{ass} \label{ass:terminal_trio}
The stage cost satisfies $l(0,0) = 0$, and there exists $\alpha_l \in \Kinf$ such that $\alpha_l(||x||) \leq l(x,u)$ for all $(x,u)\in X\times U$. The set $X_f = \{x ~ | ~ c_N(x) \leq 0\}$ is an admissible control invariant set for \eqref{eq:dyn_system} and $V_f$ is a Control Lyapunov Function for \eqref{eq:dyn_system} such that for all $x\in X_f$,
\begin{equation*}
	\underset{u}{\mathrm{min}} \{V_f(x^+) - V_f(x) + l(x,u)~ | ~ (x,u) \in Z, x^+ \in X_f\} \leq 0,
\end{equation*}
where $x^+ = f_d(x,u,0)$ and $Z = \{(x,u) ~ | ~ c(x,u) \leq 0\} \subseteq X\times U$.
\end{ass}
We will also make use of robust positively invariant sets when discussing constraint satisfaction.
\begin{dfn} \cite{limon2009input} The set $\Omega \subseteq \reals^{n_x}$ is a Robust Positively Invariant (RPI) set for system \eqref{eq:dyn_cl} with respect to $D$ if $f(x,d) \in \Omega$ for all $x\in \Omega$, and $d \in D$. If, in addition, $\Omega \subseteq \{x~|~(x,k_{mpc}(x)) \in Z\}$, where $Z$ is defined in assumption~\ref{ass:terminal_trio}, then $\Omega$ is called an admissible RPI set.
\end{dfn} 

Now suppose that not enough computational resources are available to accurately solve \eqref{eq:OCP} at each sampling instance. To reduce computational requirements we can instead approximately track solutions of \eqref{eq:OCP} as the parameter $x$ in \eqref{eq:OCP_param} varies in time by applying a fixed number of iterations of an appropriate numerical iterative method and \textit{warmstarting} each problem with the approximate solution from the previous timestep. This can lead to considerable computational savings. However, it also introduces an error $d_k$ between the optimal MPC control action and the one which is applied. In essence, one has to consider a new dynamical system,
\begin{equation} \label{eq:op_eq}
	z_{k} = T(z_{k-1},x_k),
\end{equation}
where $z$ represents an estimate of the solution of \eqref{eq:OCP} and the function $T$ represents the iterative method which runs in parallel with the plant. This leads to an interconnected plant-optimizer system as shown in Figure~\ref{fig:subopt_mpc_fig}. 

In this paper we suggest a specific Newton-type method and analyze the resulting interconnected system from a systems theoretic point of view. In Section~\ref{ss:background} we provide some background on the concepts used in our analysis. In Section~\ref{ss:sspc} we describe our proposed method in detail. In Section~\ref{ss:ISS_prop} we establish the ISS properties of \eqref{eq:op_eq} and use them to derive sufficient conditions for the stability of the interconnected system in Section~\ref{ss:ISS_MPC}. Finally, numerical examples are reported in Section~\ref{ss:num_ex}.

\section{Background}\label{ss:background}

We will make extensive use of the notion of input-to-state stability (ISS) \cite{jiang2001input}. Since we consider constrained systems it is natural to use a local notion of ISS. We will also make extensive use of the notion of an asymptotic gain\footnote{Recall that a function $\gamma :\reals_+ \to \reals_+$ is said to be of class $\K$ if it is continuous, strictly increasing and $\gamma(0) = 0$. If it is also unbounded then $\gamma \in \Kinf$. A function $\beta:\reals_+ \times \reals_+ \to \reals_+$ is said to be of class $\KL$ if $\beta(\cdot,s) \in \K$ for each fixed $s\geq 0$ and $\beta(r,s) \to 0$ as $s \to \infty$ for fixed $r \geq 0$.}.

\begin{dfn} \label{def:LISS} \cite{jiang2004nonlinear}
Consider a system, 
\begin{equation} \label{eq:ISS_defsys}
	x_{k+1} = f(x_k,u_k),
\end{equation}
and let $\phi(k,x_0,\mathbf{u})$ be its solution at time $k$ with inputs $\mathbf{u} = \{u_0, u_1, ..., u_{k-1}\}$ and initial condition $x_0$. The system is said to be locally input-to-state stable (LISS) if there exists $\eps> 0$, $\beta \in \KL$, and $\gamma \in \K$ such that, $\forall k \in \mathbb{Z}_+$,
\begin{equation}
	||\phi(k,x_0,\mathbf{u})|| \leq \textrm{max}\left[\beta(||x_0||,k),\gamma \left(\underset{0\leq j \leq k}{max}~||u_j||\right)\right],
\end{equation}
provided $||x_0||\leq \eps$ and $\underset{0\leq j \leq k}{max}~||u_j|| \leq \eps$ for all $k \geq 0$.
\end{dfn}
\begin{dfn} \label{def:ISS_gain} \cite{jiang2001input}
Consider system \eqref{eq:ISS_defsys}, we say that it has an asymptotic gain if there exists some $\gamma \in \K$ such that
\begin{equation}
	\limsupk ||\phi(k,x_0,\mathbf{u})|| \leq \gamma \left( \limsupk \left[\underset{0\leq j \leq k}{max}~||u_j||\right]\right),
\end{equation}
for all $x_0 \in \reals^{n_x}$.
\end{dfn}

The following theorem provides tools for establishing conditions under which a system is LISS and for characterizing its asymptotic gain.

\begin{thm} \label{thm:ISS_gain} \cite[Lemma 2.3]{jiang2004nonlinear}
Suppose the system in Definition~\ref{def:LISS} admits a continuous local Lyapunov function $V$ such that for some $\alpha_1,\alpha_2 \in \Kinf$, $c>0$ and $\sigma \in \K$,
\begin{gather*}
\alpha_1(||x||) \leq V(x)\leq \alpha_2(||x||),\\
V(f(x,u)) - V(x) \leq -c V(x) + \sigma(||u||),
\end{gather*}
for all $(x,u)$ in a neighbourhood of the origin. Then the system is LISS and for any $c_0 \in (0,1)$ its asymptotic gain $\gamma \in \K$ can be chosen such that
\begin{equation*}
	\gamma(s) \leq \alpha_1^{-1}\left(\frac{\sigma(s)}{(1-c_0)c}\right),~\forall s > 0.
\end{equation*}
\end{thm}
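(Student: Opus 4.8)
The plan is to run the standard discrete-time ISS--Lyapunov argument, using the free parameter $c_0$ to split the decrease term so that the advertised gain appears with no spurious constant. First I would rewrite the dissipation inequality as $V(x_{k+1}) \leq (1-c)V(x_k) + \sigma(\|u_k\|)$ and, for fixed $c_0 \in (0,1)$, establish the key dichotomy: writing $b_k := \sigma(\|u_k\|)/((1-c_0)c)$, if $V(x_k) \geq b_k$ then $\sigma(\|u_k\|) \leq (1-c_0)c\,V(x_k)$, whence $V(x_{k+1}) - V(x_k) \leq -c\,V(x_k) + (1-c_0)c\,V(x_k) = -c_0 c\,V(x_k)$, i.e. $V(x_{k+1}) \leq \lambda V(x_k)$ with $\lambda := 1 - c_0 c \in (0,1)$ (here $c\leq 1$ follows from $V\geq 0$ and the inequality applied with $u=0$, and $c_0<1$ then forces $\lambda\in(0,1)$). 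Conversely, if $V(x_k) < b_k$, a direct substitution together with the algebraic identity $(1-c) + (1-c_0)c = 1 - c_0 c$ gives $V(x_{k+1}) \leq \lambda b_k < b_k$; this exact cancellation is precisely what pins the threshold to $b_k$ and hence the gain to $\alpha_1^{-1}(\sigma(s)/((1-c_0)c))$ with no extra factor.

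Second, I would prove by induction on $k$ the bound $V(x_k) \leq \max\{\lambda^k V(x_0),\, \max_{0\leq j\leq k-1} b_j\}$. The base case is trivial, and the inductive step splits on the two cases above: in the contraction case one multiplies the hypothesis by $\lambda<1$; in the other case the estimate $V(x_{k+1}) < b_k$ is already dominated by the running maximum. Using monotonicity of $\sigma$, the second argument equals $\sigma(\max_{0\leq j\leq k-1}\|u_j\|)/((1-c_0)c) \leq \sigma(\bar u_k)/((1-c_0)c)$, where $\bar u_k := \max_{0\leq j\leq k}\|u_j\|$.

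Third, I would translate this into the LISS estimate of Definition~\ref{def:LISS}. Since $\|x_k\| \leq \alpha_1^{-1}(V(x_k))$ and $V(x_0)\leq \alpha_2(\|x_0\|)$, and since $\alpha_1^{-1}$ is increasing and therefore commutes with the maximum, I obtain $\|x_k\| \leq \max\{\beta(\|x_0\|,k),\, \gamma(\bar u_k)\}$ with $\beta(r,k) := \alpha_1^{-1}(\lambda^k \alpha_2(r))$ and $\gamma(s) := \alpha_1^{-1}(\sigma(s)/((1-c_0)c))$. One checks $\beta\in\KL$ (for fixed $k$ it is a composition of $\Kinf$ maps in $r$, and $\lambda^k\to 0$ forces $\beta(r,k)\to 0$) and $\gamma\in\K$, which is exactly the claimed gain. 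Taking $\limsupk$ and using that $\beta(\|x_0\|,k)\to 0$ while $\bar u_k$ is nondecreasing and $\gamma$ is continuous then yields the asymptotic-gain inequality of Definition~\ref{def:ISS_gain}.

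The main obstacle is not this algebra but the \emph{locality}: the hypotheses hold only on a neighborhood $\mathcal N$ of the origin, so I must exhibit $\eps>0$ guaranteeing that the whole trajectory remains in $\mathcal N$ whenever $\|x_0\|\leq\eps$ and $\sup_j\|u_j\|\leq\eps$, since otherwise the dissipation inequality cannot be invoked at every step. I would fix a sublevel set $\{V\leq r\}\subseteq\mathcal N$, use the induction to show it is forward invariant under sufficiently small inputs (the threshold $b_k$ remains inside it), and then pick $\eps$ small enough, via the sandwich $\alpha_1,\alpha_2$, so that $\|x_0\|\leq\eps$ together with small inputs keeps $V(x_k)\leq r$ for all $k$. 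This invariance bookkeeping, rather than the gain computation, is the delicate part.
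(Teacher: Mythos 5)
The paper does not actually prove this statement---it is quoted with a citation to \cite[Lemma 2.3]{jiang2004nonlinear}---so there is no in-paper proof to compare against. Your argument is nevertheless a correct, self-contained reconstruction of the standard one: the dichotomy at the threshold $b_k=\sigma(\|u_k\|)/((1-c_0)c)$, the induction giving $V(x_k)\le\max\{\lambda^k V(x_0),\,\max_{0\le j\le k-1}b_j\}$ with $\lambda=1-c_0c\in(0,1)$ (your remark that $c\le 1$ must first be extracted from the dissipation inequality with $u=0$ is exactly the detail that makes $\lambda$ well defined), and the sublevel-set forward-invariance argument for the locality are all sound and yield precisely the advertised gain $\alpha_1^{-1}(\sigma(s)/((1-c_0)c))$. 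The only caveat worth recording is that the resulting asymptotic-gain inequality holds only for initial conditions and inputs confined to the invariant sublevel set, consistent with the local hypotheses rather than with the ``for all $x_0\in\reals^{n_x}$'' phrasing of Definition~\ref{def:ISS_gain}.
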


Next we will impose some conditions on \eqref{eq:OCP} which are needed by the SSPC method. We can compactly write \eqref{eq:OCP} as a parametrized nonlinear program
\begin{subequations}  \label{eq:NLP}
\begin{align}
\underset{w}{\mathrm{min.}} \quad &f(w,p),\\
\mathrm{s.t.} \quad &g(w,p) = 0, \\
&h(w,p) \leq 0,
\end{align}
\end{subequations}
where $w = (\xi,u)$, $x = p$ is the parameter, $f: \reals^n \times \reals^{n_x} \to \reals$, $g:\reals^n \times \reals^{n_x} \to \reals^m$, and $h:\reals^n \times \reals^{n_x} \to \reals^q$. The Lagrangian of \eqref{eq:NLP} is defined as $L(w,\lambda,v,p) = f(w,p) + \lambda^T g(w,p) + v^T h(w,p)$ where $\lambda \in \reals^m$ and $v \in \reals^q$ are dual variables. Let $z = (w,\lambda,v)$, the Karush-Kuhn-Tucker (KKT) conditions for \eqref{eq:NLP} are 
\begin{subequations} \label{eq:KKT}
\begin{gather}
\nabla_w L(w,\lambda,v,p) = 0,\\
g(w,p) = 0,\\
h(w,p) \leq 0,~v \geq 0,~v^T h(w,p) = 0. \label{eq:KKTcomp}
\end{gather}
\end{subequations}
The primal-dual solution mapping of \eqref{eq:KKT}, which may be multivalued since \eqref{eq:NLP} is not assumed convex, will be denoted by 
\begin{equation}
	\bar{S}(p) = \{z = (w,\lambda,v) ~|~ \eqref{eq:KKT} \text{ are satisfied}\}.
\end{equation}
We impose some regularity conditions on the \eqref{eq:NLP} to ensure that the mapping $\bar{S}(p)$ is ``well behaved''. The linear independence constraint qualification (LICQ) is said to hold at a point $(\bar{z},\bar{p})$ if
\begin{equation}
  \text{rank}~\begin{bmatrix}
  \nabla_w g(\bar{w},\bar{p})\\
  [\nabla_w h(\bar{w},\bar{p})]_i
  \end{bmatrix}
   = m + |I_a(\bar{w},\bar{p})|,~ i \in I_a(\bar{w},\bar{p}),
\end{equation}
where $I_a(w,p) = \{i\in 1~...~ q~|~ h_i(w,p) =0\}$ is the index set of active constraints. Further, if a KKT point $(\bar{z},\bar{p})$ satisfying the LICQ also satisfies
\begin{equation}
  u^T \nabla_w^2 L(\bar{z},\bar{p}) u> 0,~\forall u \in \mathcal{K}_+(\bar{w},\bar{v},\bar{p}) \setminus \{0\},
\end{equation}
where $\mathcal{K}_+(w,v,p) = \{u\in \reals^n~|~ \nabla_w g(\bar{w},\bar{p})u = 0,~ \nabla_w h_i(\bar{w},\bar{p}) u \leq 0, i \in I^+_a(\bar{w},\bar{v},\bar{p}), \nabla_w f(\bar{w},\bar{p})^T u \leq 0\}$, and $I_a^+(w,v,p) = I_a(w,p) \cap \{i~|~ v_i > 0\}$ then it is said to satisfy the strong second order sufficient conditions (SSOSC). Any KKT point which satisfies the SSOSC and the LICQ is a strict local minimizer of \eqref{eq:NLP}. Our main regularity assumption follows.
\begin{ass} \label{ass:pointwise_strong_reg} (Pointwise strong regularity)
The LICQ and SSOSC hold at all KKT points in $\Gamma$.
\end{ass}
Theorem~\ref{thm:lipschitz_param} establishes Lipschitz continuity of primal-dual solutions of \eqref{eq:KKT} and of the optimal value function. Theorem~\ref{thm:sol_traj} shows that the solution trajectories of the OCP are isolated and can be tracked. Finally, Theorem~\ref{rmk:MPC_stab} establishes the ISS properties of \eqref{eq:OCP}.
\begin{thm} \label{thm:lipschitz_param}
At each $(\bar{z},\bar{p})$ satisfying \eqref{eq:KKT} there exists a neighbourhood $P$ of $\bar{p}$ and a constant $L_p(\bar{p},\bar{z})$ such that $\bar{S}(p)$ is a single valued function satisfying $||\bar{S}(p) - \bar{z}|| \leq L_p ||p - \bar{p}||,~ \forall p \in P$.
\end{thm}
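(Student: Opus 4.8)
The plan is to recast the KKT system \eqref{eq:KKT} as a parametric generalized equation and to deduce the result from Robinson's theory of strongly regular generalized equations, under which the solution mapping is automatically locally single-valued and Lipschitz. Setting $F(z,p) = (\nabla_w L(z,p),\, g(w,p),\, -h(w,p))$ and $C = \reals^{n}\times\reals^{m}\times\reals^{q}_+$, the conditions \eqref{eq:KKT} are equivalent to the inclusion $0 \in F(z,p) + N_C(z)$, where $N_C$ is the normal cone to $C$: the stationarity and equality blocks are forced to zero by $N_{\reals^n}(w)=\{0\}$ and $N_{\reals^m}(\lambda)=\{0\}$, while $0 \in -h(w,p) + N_{\reals^q_+}(v)$ reproduces exactly the complementarity relation \eqref{eq:KKTcomp}. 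Thus $\bar{S}(p)$ is precisely the solution map of this generalized equation, and the claim is its local single-valuedness and Lipschitz continuity about $(\bar{z},\bar{p})$. Equivalently one may replace the complementarity block by a semismooth residual, e.g. the min-map $\min(v,-h)$, and argue via the nonsmooth implicit function theorem, which ties the result directly to the rootfinding viewpoint exploited by SSPC in Section~\ref{ss:sspc}.

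Following Robinson, strong regularity at $(\bar{z},\bar{p})$ means that the generalized equation obtained by linearizing $F(\cdot,\bar{p})$ at $\bar{z}$, namely $\delta \in F(\bar{z},\bar{p}) + \nabla_z F(\bar{z},\bar{p})(z-\bar{z}) + N_C(z)$, has a unique solution $z(\delta)$ near $\bar{z}$ for every small $\delta$, with $\delta \mapsto z(\delta)$ Lipschitz. The core of the argument is to show that Assumption~\ref{ass:pointwise_strong_reg} delivers this property. Here $\nabla_z F(\bar{z},\bar{p})$ has the familiar saddle-point structure built from the Hessian $\nabla_w^2 L(\bar{z},\bar{p})$ and the constraint Jacobians. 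LICQ supplies full row rank of the active constraint gradients, which both pins down the multipliers uniquely and makes the linearized feasibility system nondegenerate, while the SSOSC supplies positive definiteness of $\nabla_w^2 L(\bar{z},\bar{p})$ on the extended critical cone $\mathcal{K}_+(\bar{w},\bar{v},\bar{p})$, which renders the linear-quadratic problem associated with the linearized generalized equation strictly convex on the relevant subspace and hence uniquely and Lipschitz-continuously solvable in the right-hand side.

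The main obstacle is the absence of strict complementarity: weakly active constraints with $\bar{v}_i = 0$ force the linearized problem to be a genuine variational inequality over a polyhedral cone rather than a single linear system, so one cannot simply freeze the active set and invoke the classical implicit function theorem. This is exactly where the extended cone $\mathcal{K}_+$ (rather than the nullspace of the strongly active constraints alone) is essential: the SSOSC imposes curvature over every face onto which the active set may collapse as $\delta$ varies, guaranteeing that the piecewise-linear solution map of the linearized generalized equation is well defined and single-valued across all such faces. Verifying this face-by-face solvability and stitching the pieces into one globally Lipschitz map is the technical heart, and is precisely the content of the classical equivalence ``LICQ $+$ SSOSC $\Rightarrow$ strong regularity'' for nonlinear programs.

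With strong regularity established, Robinson's implicit-function theorem for generalized equations yields a neighbourhood $P$ of $\bar{p}$ on which $\bar{S}$ is single-valued and satisfies $||\bar{S}(p)-\bar{z}|| \leq L_p\,||p-\bar{p}||$. The constant may be taken as $L_p = \kappa\,||\nabla_p F(\bar{z},\bar{p})||$, where $\kappa$ is the Lipschitz modulus of the linearized solution map and $||\nabla_p F(\bar{z},\bar{p})||$ is finite by the $\mathcal{C}^2$ regularity of Assumption~\ref{ass:ocp_continuity}; both factors depend on the reference point, which accounts for the pointwise dependence $L_p(\bar{p},\bar{z})$.
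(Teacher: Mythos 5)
Your proposal is correct and takes essentially the same route as the paper: both recast the KKT conditions as a generalized equation, use the classical equivalence between LICQ $+$ SSOSC and Robinson's strong regularity, and then invoke the implicit function theorem for strongly regular generalized equations to obtain local single-valuedness and Lipschitz continuity of $\bar{S}$. The paper's proof is a two-line citation of Dontchev and Rockafellar for both steps, whereas you additionally sketch why the equivalence holds; the substance is the same.
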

\begin{proof}
The LICQ and SSOSC are necessary and sufficient for strong regularity of the KKT system, see e.g., \cite[Theorem 2G.8]{dontchev2009implicit}, and strong regularity implies that $S$ is locally a Lipschitz continuous function \cite[Theorem 2B.1]{dontchev2009implicit}.
\end{proof}

\begin{thm} \label{thm:sol_traj}\cite[Theorem 6G.1]{dontchev2009implicit} Suppose $p$ is prescribed as a Lipschitz continuous function of a scalar $t \geq 0$. Then the solution trajectory mapping $S(p(t))$ is comprised of isolated Lipschitz continuous trajectories.
\end{thm}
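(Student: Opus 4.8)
The plan is to bootstrap the \emph{local} single-valued Lipschitz localization of the solution mapping, already supplied by Theorem~\ref{thm:lipschitz_param}, into a \emph{global} statement about trajectories by a continuation argument in the time variable $t$. The two ingredients I would exploit are (i) the pointwise strong regularity of Assumption~\ref{ass:pointwise_strong_reg}, which makes every KKT point in $\Gamma$ locally unique, and (ii) the Lipschitz dependence of $p$ on $t$, which lets me compose Lipschitz maps.

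First I would fix $t_0 \ge 0$ and pick any $z_0 \in S(p(t_0))$. By Theorem~\ref{thm:lipschitz_param} there is a neighbourhood $P$ of $p(t_0)$ and a constant $L_p$ so that $\bar{S}$ restricts to a single-valued Lipschitz map $s:P\to\reals^n$ with $s(p(t_0)) = z_0$. In particular $z_0$ is an isolated solution, so for each fixed $t$ the set $S(p(t))$ is discrete. Continuity of $p(\cdot)$ gives $\delta > 0$ with $p(t)\in P$ for $|t - t_0| < \delta$, and I would define the local branch $z(t) := s(p(t))$. Writing $L_t$ for the Lipschitz constant of $p(\cdot)$, the fact that $z(t) = s(p(t))$ with $s$ being $L_p$-Lipschitz yields
\begin{equation*}
	\|z(t) - z(t')\| \le L_p\,\|p(t) - p(t')\| \le L_p L_t\, |t - t'|,
\end{equation*}
so $z(\cdot)$ is Lipschitz on $(t_0 - \delta, t_0 + \delta)$.

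Next I would extend each branch to its maximal interval of existence by repeatedly reapplying the localization at the current endpoint. Two observations make the branches globally well defined. The trajectories cannot escape to infinity because, by Assumption~\ref{ass:ss_properties}, the feasible primal variables lie in the compact set $X\times U$ while the multipliers stay bounded along any branch thanks to the LICQ; hence each branch lives in a precompact set and the continuation does not terminate in the interior of $\Gamma$. Distinct branches moreover cannot cross or merge: if $z_1(t^*) = z_2(t^*)$ for two branches, then on a neighbourhood of $t^*$ both coincide with the single-valued localization $s(p(\cdot))$, so the set $\{t : z_1(t) = z_2(t)\}$ is open; it is also closed by continuity, hence it is all of the connected common domain, contradicting distinctness. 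Therefore $S(p(t))$ decomposes into disjoint, isolated, Lipschitz-continuous trajectories.

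The main obstacle is this globalization step: guaranteeing that a branch persists and that the number of branches does not change spuriously. The delicate point is what happens as $p(t)$ approaches the boundary of $\Gamma$, where the active set can change and a branch may genuinely appear or disappear. Inside $\Gamma$, pointwise strong regularity forbids turning points and fold bifurcations, so the branch count is locally constant; the argument must therefore be localized to the interior of $\Gamma$, with the boundary treated separately. To close this gap I would lean on the equivalence of strong regularity with LICQ and SSOSC, as in \cite[Theorem 2G.8]{dontchev2009implicit}, which is exactly the machinery underlying \cite[Theorem 6G.1]{dontchev2009implicit}.
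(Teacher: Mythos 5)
The paper does not actually prove this statement: it is imported verbatim as a citation to \cite[Theorem 6G.1]{dontchev2009implicit}, so there is no in-paper argument to compare yours against. Judged on its own terms, your reconstruction follows the standard route that underlies the cited result, and the local half is correct and complete: strong regularity (via LICQ and SSOSC, Theorem~\ref{thm:lipschitz_param}) gives a single-valued Lipschitz localization $s$ of $\bar S$ near any KKT pair, and composing with the Lipschitz map $t \mapsto p(t)$ gives a branch $z(t) = s(p(t))$ that is Lipschitz with constant $L_p L_t$; local uniqueness of KKT points also gives isolatedness of $S(p(t))$ for each fixed $t$. The open--closed argument showing two branches that touch must coincide on their common connected domain is also sound.

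The genuine soft spot is the globalization step, and you have effectively flagged it yourself rather than closed it. To extend a branch past the supremum $t^*$ of its maximal interval you need three things made precise: (i) the branch remains bounded as $t \uparrow t^*$ (your appeal to compactness of $X\times U$ handles the primal part, but boundedness of the multipliers requires that LICQ hold \emph{uniformly} over the compact set so that the relevant pseudo-inverses have uniformly bounded norm --- this is true here but is asserted, not argued); (ii) the limit point exists and is again a KKT point, which needs closedness of the graph of $\bar S$ (continuity of $F$ in \eqref{eq:Fmapping} gives this, but you do not invoke it); and (iii) strong regularity at that limit point, which only holds if the limiting parameter still lies in $\Gamma$ --- the case $p(t^*) \in \partial\Gamma$, where branches can genuinely terminate, is deferred ``to be treated separately'' and then, in your last sentence, referred back to ``the machinery underlying [Theorem 6G.1]'', i.e., to the theorem being proved. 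As written the argument is therefore a correct local result plus a continuation skeleton whose hardest step is a promissory note; it would be acceptable as a sketch accompanying the citation, but not as a self-contained replacement for it.
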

In this paper we will concern ourselves with 
\begin{equation}
	S(p) \in \bar{S}(p)
\end{equation}
which denotes the solution mapping corresponding to the global optimum \footnote{We assume that $S$ is a function, if it is not then one could consider a restriction of $S$ thanks to Theorem~\ref{thm:sol_traj}} of \eqref{eq:OCP}. Local minima may cause the closed-loop system to converge to non-zero equilibrium points.

\begin{thm} \label{rmk:MPC_stab} \cite[Theorem 4]{limon2009input} 
Let Assumptions~\ref{ass:ss_properties} - \ref{ass:pointwise_strong_reg} hold, then the closed-loop system~\eqref{eq:dyn_cl} is LISS with respect to $d$ on a robust positively invariant set $\Omega \subseteq \Gamma$ .
\end{thm}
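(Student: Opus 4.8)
The plan is to exhibit the optimal value function $J^\star(x)$ of the OCP \eqref{eq:OCP}, i.e. the value of $J$ at the global minimizer $S(x)$, as a local ISS-Lyapunov function for the closed loop \eqref{eq:dyn_cl} and then conclude via Theorem~\ref{thm:ISS_gain}; this amounts to verifying the hypotheses of \cite[Theorem 4]{limon2009input}. First I would establish the sandwich bounds $\alpha_1(||x||) \leq J^\star(x) \leq \alpha_2(||x||)$. The lower bound is immediate from Assumption~\ref{ass:terminal_trio}: since $\xi_0 = x$, we have $J^\star(x) \geq l(x,k_{mpc}(x)) \geq \alpha_l(||x||)$, so we may take $\alpha_1 = \alpha_l$. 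For the upper bound, Theorem~\ref{thm:lipschitz_param} guarantees that $J^\star$ is locally Lipschitz, and since $f_d(0,0)=0$, $l(0,0)=0$ and the origin lies in the terminal set, $J^\star(0)=0$; local Lipschitz continuity on the compact feasible region then furnishes some $\alpha_2 \in \Kinf$.

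Second, I would derive the nominal descent. Writing $x^+ = f_d(x,k_{mpc}(x))$ for the disturbance-free successor, the standard MPC construction uses Assumption~\ref{ass:terminal_trio} to build a feasible shifted control sequence for $x^+$ by appending the CLF-minimizing input at the terminal stage. Control invariance of $X_f$ and the Lyapunov inequality on $V_f$ then yield
$$J^\star(x^+) - J^\star(x) \leq -l(x,k_{mpc}(x)) \leq -\alpha_l(||x||).$$

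Third, and this is the crux, I would quantify the effect of the suboptimality disturbance. The realized successor is $f(x,d) = f_d(x,k_{mpc}(x)+d)$ rather than $x^+$. Exploiting the Lipschitz continuity of $J^\star$ from Theorem~\ref{thm:lipschitz_param} together with the $\mathcal{C}^2$ (hence locally Lipschitz) regularity of $f_d$ from Assumption~\ref{ass:ocp_continuity}, I would bound $|J^\star(f(x,d)) - J^\star(x^+)| \leq \sigma(||d||)$ for some $\sigma \in \K$ on a compact neighbourhood of the origin. Adding this to the nominal descent gives the ISS-Lyapunov inequality $J^\star(f(x,d)) - J^\star(x) \leq -\alpha_l(||x||) + \sigma(||d||)$. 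Using the sandwich bounds to replace $-\alpha_l(||x||)$ by a term of the form $-c\,J^\star(x)$ on a compact set puts this in the exact form required by Theorem~\ref{thm:ISS_gain}, which then certifies LISS and supplies the asymptotic gain.

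Finally, I would take $\Omega$ to be a sublevel set $\{x : J^\star(x) \leq \rho\} \subseteq \Gamma$ and show it is RPI for \eqref{eq:dyn_cl} under disturbances bounded by some $\eps$: the descent inequality guarantees that for $||d|| \leq \eps$ small enough the right-hand side stays nonpositive on the boundary level set, so trajectories cannot leave $\Omega$ and feasibility, hence well-definedness of $J^\star$ and $k_{mpc}$, is preserved. I expect the main obstacle to be the third step, namely dominating the disturbance-induced perturbation of the value function by a $\K$ function of $||d||$; this rests entirely on the regularity granted by Assumption~\ref{ass:pointwise_strong_reg} via Theorem~\ref{thm:lipschitz_param}, without which uniform continuity of $J^\star$, and therefore ISS, would be unavailable. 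A secondary difficulty is calibrating the bound $\eps$ against the descent margin so that $\Omega$ is genuinely robustly invariant.
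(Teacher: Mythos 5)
The paper does not actually prove this theorem: it imports it wholesale as \cite[Theorem 4]{limon2009input}, with Assumptions~\ref{ass:ss_properties}--\ref{ass:pointwise_strong_reg} serving only to verify that reference's hypotheses (in particular, Assumption~\ref{ass:pointwise_strong_reg} delivering, via Theorem~\ref{thm:lipschitz_param}, the uniform continuity of the optimal cost that Limon et al.\ require). Your proposal is therefore not an alternative to the paper's argument but a reconstruction of the cited result's proof, and it follows the standard route correctly: optimal value function as a local ISS-Lyapunov function, lower bound from the stage cost, upper bound from continuity at the origin, nominal descent from the shifted sequence and the terminal CLF, and the disturbance gain from uniform continuity of $J^\star$. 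Two points deserve more care than you give them. First, there is a mild circularity in step three: to write $J^\star(f(x,d))$ you must already know $f(x,d)\in\Gamma$, but feasibility of the perturbed successor is exactly what the invariance argument is meant to establish; the standard fix (as in \cite{limon2009input}) is to work on a slightly shrunken sublevel set and use compactness and continuity to show that sufficiently small $d$ cannot push the state out of $\Gamma$ in one step, before the Lyapunov inequality is invoked. Second, the conversion of $-\alpha_l(\|x\|)$ into $-c\,J^\star(x)$ with a constant $c$, needed to match the exact form of Theorem~\ref{thm:ISS_gain}, only holds on a compact sublevel set (bounding $\alpha_l\circ\alpha_2^{-1}$ below by a linear function there); this is harmless for a local result but should be stated. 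With those repairs your argument is sound and is essentially the proof the paper delegates to the literature.
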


\begin{rmk}
Pointwise strong regularity is a common assumption in time varying optimization, e.g., \cite{zavala2010real,hours2016parametric,dinh2012adjoint,dontchev2013euler}. The SSOSC is generally easy to enforce through appropriate regularization \cite{bieglerIFAC}. If the only constraints are upper and lower control input bounds then the LICQ can be proven to hold a-priori, otherwise the problem can be reformulated, e.g., as described in \cite{yang2015nonlinear}.
\end{rmk}

\section{The semismooth predictor corrector method} \label{ss:sspc}
This section describes the SSPC method introduced in \cite{liaomcpherson2018} which is based on mapping the KKT conditions to a parameterized rootfinding problem by replacing the complementarity conditions \eqref{eq:KKTcomp} with nonsmooth equations. This is done using an nonlinear complementarity (NCP) function \cite{sun1999ncp} 
$\psi:\reals^2 \mapsto \reals$ which has the property that
\begin{equation}
	\psi(a,b) = 0 \Leftrightarrow a \geq 0,~b\geq 0, ~ab = 0.
\end{equation}
We use the Fischer-Burmeister (FB) NCP function \cite{fischer1992special}
\begin{equation}
	\psi(a,b) = a+b - \sqrt{a^2 + b^2}.
\end{equation}
The FB function is semismooth \cite{qi1993nonsmooth}, which allows us to use a nonsmooth Newton-type method based on generalized derivatives\footnote{For a function $G:\reals^N \mapsto \reals^M$, $\partial G(x)$ denotes Clarke's Generalized Jacobian \cite{clarke1990optimization}.}. Following \cite{qi1997semismooth,fischer1992special} the SSPC method uses the FB function to map points satisfying \eqref{eq:KKT} to roots of the following parameterized semismooth rootfinding problem,
\begin{equation} \label{eq:Fmapping}
	F(z,p) = \begin{bmatrix}
		\nabla_z L(w,\lambda,v,p)\\
	g(w,p)\\
	\phi(-h(w,p),v)
	\end{bmatrix},
\end{equation}
where $z = (w,\lambda,v)$ is the primal-dual variable and $\phi$ is the concatenation of $\psi(-h_i(w,p),v_i)$ for $i = 1,\ldots,q$. For any fixed $p$ the roots of $F(z,p)$ coincide with $\bar{S}(p)$ so we can obtain solution trajectories by tracking solutions of $F(x,p) = 0$ as $p$ varies in time. The predictor and corrector steps are,
\begin{subequations} \label{eq:pc}
\begin{gather}
V_{k-1} (p_k - p_{k-1}) + B_{k-1} (\bar{z}_k - z_{k-1}) = 0, \label{eq:predictor}\\
F_{k}(\bar{z}_k,p_k) + \bar{B}_{k} (z_k - \bar{z}_k) = 0, \label{eq:corrector}
\end{gather}
\end{subequations}
where $V_{k-1} \in \partial_p F(z_{k-1},p_{k-1})$, $B_{k-1} \in \partial_zF(z_{k-1},p_{k-1})$, and $\bar{B}_k = \partial_z F(\bar{z}_k,p_k)$. The predictor solves \eqref{eq:predictor} for $\bar{z}_k$. The product $B_{k-1}^{-1}V_{k-1}\Delta p$ is the directional derivative of the solution mapping in the direction $\Delta p = p_k-p_{k-1}$. The predictor can thus be interpreted as an Euler integration step. The corrector solves \eqref{eq:corrector} for $z_k$ and is a single iteration of the semismooth Newton's method. The generalized Jacobians used in \eqref{eq:pc} are 
\begin{equation} \label{eq:dF_z}
	\partial_z F = \begin{bmatrix}
		\nabla_w^2 L(z,p) & \nabla_w g(w,p)^T & \nabla_w h(w,p)^T\\
		\nabla_w g(w,p) & 0 & 0\\
		-C \nabla_w h(w,p) & 0 & D
	\end{bmatrix},
\end{equation}
where $C = \text{diag}(\nu)$ and $D = \text{diag}(\mu)$ are
\begin{equation}
	(\nu_i,\mu_i) \in \begin{cases}
	(1  + \frac{h_i}{r_i},1  - \frac{v_i}{r_i}), & \text{if } (h_i,v_i) \neq 0,\\
	(1-a,1-b),  & \text{if } (h_i,v_i) = 0,
	\end{cases},
\end{equation}
$h_i = h_i(w,p)$, $r_i = ||(h_i(w,p),v_i)||_2$, and $(a,b)$ are arbitrary scalars satisfying $||(a,b)||_2 = 1$. Any value of $(a,b)$ works, we use $a = b = 2^{-1/2}$; in our experience adjusting this value does not result in any performance changes. The derivative $\partial_p F(z,p)$ consists of all matrices of the form
\begin{equation} \label{eq:dF_p}
	\partial_p F= \begin{bmatrix}
		\nabla_{pz}L(z,p)\\
		\nabla_{p}g(w,p)\\
		-C \nabla_{p} h(w,p)
	\end{bmatrix},
\end{equation}
where $C$ is the same matrix as in \eqref{eq:dF_z}.  All elements of $\partial_z F(z,\bar{p})$ are guaranteed to be invertible in a neighbourhood of any $\bar{z} \in S(\bar{p})$ \cite[Proposition 1]{liaomcpherson2018}. As detailed in \cite[Theorem 2]{liaomcpherson2018} one can establish error bounds for \eqref{eq:pc},
\begin{subequations} \label{eq:sspc_bounds}
\begin{gather} 
	||\bar{e}_k|| \leq ||e_{k-1}|| + c ||p_k - p_{k-1}||^2,\\
	||e_k|| \leq \eta ||\bar{e}_k||^2,
\end{gather}
\end{subequations}
where $e_k = z_k - S(p_k)$, and $c, \eta >0$ are positive constants that depend on the properties of $F$. The error bound \eqref{eq:sspc_bounds} is looser than the one given in \cite[Theorem 2]{liaomcpherson2018} but is algebraically cleaner. The proof is analogous to that of \cite[Theorem 2]{liaomcpherson2018}.

\begin{rmk} \label{rmk:err_bnd}
The error bound \eqref{eq:sspc_bounds} holds provided $\Delta p_{k-1} = p_k - p_{k-1} \in P(p_{k-1})$ and $e_k \in E(p_k)$ where $P(p), E(p)$ are set valued mappings to neighbourhoods of the origin. Since the parameter set $X$ is compact there exists sets $\mathcal{E}$ and $\mathcal{P}$ satisfying $\mathcal{E} \subseteq E(p),~ \mathcal{P} \subseteq P(p),~\forall p \in \Gamma \subseteq X$.
\end{rmk}

\section{ISS properties of the SSPC method} \label{ss:ISS_prop}
In this paper we will consider a variant of the SSPC method where $\ell \in \mathbb{Z}_{>0}$ corrector steps are taken. We can view this process as the following dynamic system,
\begin{subequations}
\begin{gather}
	z_k = T_\ell(z_{k-1}, x_k), \\ 
	u_k = H z_k
\end{gather}
\end{subequations}
where  $H$ is the matrix which selects the control input from the primal-dual solution, i.e., $u_k = HS(x_k) = k_{mpc}(x_k),$
so that the output $u_k$ approximates $k_{mpc}(x_k)$. For our ISS analysis we will work with the associated error system\footnote{The dynamic equation of the error system has been shifted forward by one time instance to bring it into the form typically used in ISS analyses.},
\begin{gather} \label{eq:error_system}
	e_{k+1} = G_\ell(e_k,\Delta x_k),\\
	\Delta u_k = H e_k
\end{gather}
where $\Delta x_k = x_{k+1} - x_k$. The error system obeys the following property
\begin{equation} \label{eq:err_bound_iss_start}
	||e_{k+1}|| \leq \eta^{2^\ell -1}(||e_k|| + c||\Delta p_k||^2)^{2^\ell},
\end{equation}
which is obtained from \eqref{eq:sspc_bounds} with the corrector applied $\ell$ times. The main result of this section is that \eqref{eq:error_system} is LISS with $\Delta x$ as an input and that, under certain conditions, its ISS gain approaches 0 as $\ell \to \infty$. We begin with a technical lemma which is proven in the appendix. 

\begin{lmm} \label{lmm:pow2}
For any scalars $a,b \geq 0, k \in \mathbb{N}$ the following holds: $(a + b)^{2^k} \leq 2^{2^{k-1}} (a^{2^k} + b^{2^k})$.
\end{lmm}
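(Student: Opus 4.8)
The plan is to prove the bound by induction on $k$, using at each stage only the elementary inequality $2xy \le x^2 + y^2$, equivalently $(x+y)^2 \le 2(x^2+y^2)$ for nonnegative $x,y$. The doubling structure of the exponents $2^k$ makes induction the natural route, since the stage-$(k+1)$ assertion is obtained simply by squaring the stage-$k$ assertion and then applying the elementary inequality a second time.

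For the base case $k=1$ I would expand $(a+b)^2 = a^2 + 2ab + b^2 \le 2(a^2+b^2)$, which is precisely the claim at $k=1$. For the inductive step I would square the hypothesis,
\begin{equation*}
(a+b)^{2^{k+1}} = \big[(a+b)^{2^k}\big]^2 \le 2^{2(2^k-1)}\big(a^{2^k}+b^{2^k}\big)^2,
\end{equation*}
and then apply the same elementary inequality once more, now to the nonnegative quantities $a^{2^k}$ and $b^{2^k}$, giving $\big(a^{2^k}+b^{2^k}\big)^2 \le 2\big(a^{2^{k+1}}+b^{2^{k+1}}\big)$. Multiplying the two constants, $2^{2(2^k-1)}\cdot 2 = 2^{2^{k+1}-1}$, recovers the claim at $k+1$ and closes the induction. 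A slicker non-inductive alternative is to invoke convexity of $t\mapsto t^{2^k}$ on $[0,\infty)$ and Jensen's inequality, which directly yields $(a+b)^{2^k}\le 2^{2^k-1}(a^{2^k}+b^{2^k})$.

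The computation itself is routine, so the only real obstacle is getting the constant right. The sharp constant in $(a+b)^n \le C(a^n+b^n)$ is $C = 2^{n-1}$, attained at $a=b$, so with $n=2^k$ the provable constant is $2^{2^k-1}$; this is exactly the factor needed to remain consistent with the exponent $2^\ell-1$ of $\eta$ appearing in \eqref{eq:err_bound_iss_start} once the lemma is applied to bound $(||e_k|| + c||\Delta p_k||^2)^{2^\ell}$. I would therefore state and prove the lemma with the constant $2^{2^k-1}$ rather than $2^{2^{k-1}}$, the latter being too small to hold once $k\ge 2$ (for instance $a=b=1$, $k=2$ gives $16 \not\le 8$).
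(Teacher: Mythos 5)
Your induction is correct, and you are right to change the constant: the lemma as printed is false for $k\geq 2$ (your counterexample $a=b=1$, $k=2$ gives $16\not\leq 8$), and the sharp constant is indeed $2^{2^k-1}$. The paper's own proof attempts the same induction --- it writes $(a+b)^{2^{k+1}} = \bigl((a+b)^2\bigr)^{2^k} \leq 2^{2^k}(a^2+b^2)^{2^k}$ and then invokes the inductive hypothesis --- but its final line silently discards the factor $2^{2^{k-1}}$ carried by that hypothesis, replacing $2^{2^k}\cdot 2^{2^{k-1}}$ by $2^{2^k}$; with the stated constant the induction cannot close, since $2^{2^k}\cdot 2^{2^{k-1}} = 2^{3\cdot 2^{k-1}} > 2^{2^k}$. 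Your ordering of the two steps (square the hypothesis, then apply $(x+y)^2\leq 2(x^2+y^2)$ to $x=a^{2^k}$, $y=b^{2^k}$, giving $2^{2(2^k-1)}\cdot 2 = 2^{2^{k+1}-1}$) closes cleanly, and the paper's ordering would also work once the constant is repaired; the Jensen argument you mention gives the same constant in one line. You are also correct that the repaired constant is the one the paper actually relies on downstream: in the proof of Theorem~\ref{thm:ISSproof} the step producing $\tfrac{1}{2\eta}(2\eta||e||)^{2^\ell} = (2\eta)^{2^\ell-1}||e||^{2^\ell}$ needs $\eta^{2^\ell-1}\cdot 2^{2^\ell-1}$, i.e., precisely the factor $2^{2^\ell-1}$, so restating the lemma with $2^{2^k-1}$ leaves the remainder of the argument intact (modulo the separate typo $||\Delta x||^{2^\ell+1}$, which should read $||\Delta x||^{2^{\ell+1}}$).
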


\begin{thm} \label{thm:ISSproof}
Consider the SSPC error system \eqref{eq:error_system} and let Assumptions~\ref{ass:ocp_continuity} and \ref{ass:pointwise_strong_reg} hold. Then there exists a monotonically increasing function $\eps_1(\ell) >0$, $\eps_2 > 0$, and $\rho \in \Kinf$ such that \eqref{eq:error_system} is LISS if $||e_0|| < \text{min}(\eps_1(\ell),\eps_2)$. $2\eta\cdot\text{min}(\eps_1(\ell),\eps_2) < 1$, and $||\Delta \mathbf{x}_k|| = \underset{0\leq j \leq k}{max}~||\Delta x_j|| \leq \rho^{-1}(\text{min}(\eps_1(\ell),\eps_2))$ $\forall k \in \mathbb{Z}_+$. In addition, there exists $\eps_3$ such that if $||\Delta \mathbf{x}_k|| < \eps_3,~\forall k \in \mathbb{Z}_+$ then there exists $\gamma_2 \in \KL$ such that the ISS gain of \eqref{eq:error_system} satisfies $\gamma(s) \leq \gamma_2(s,\ell)$.  
\end{thm}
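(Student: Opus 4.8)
\emph{Proof sketch.} The plan is to exhibit a local Lyapunov function for the error system \eqref{eq:error_system} and invoke Theorem~\ref{thm:ISS_gain} with the candidate $V(e) = \|e\|$, treating $\Delta x$ as the input. I would start from the contraction estimate \eqref{eq:err_bound_iss_start}, which is valid as long as the iterates remain in the region described in Remark~\ref{rmk:err_bnd}; that region is independent of $\ell$ and supplies the constant $\eps_2$. Since $p = x$ we have $\Delta p_k = \Delta x_k$, and the first step is to decouple the error from the disturbance by applying Lemma~\ref{lmm:pow2} to \eqref{eq:err_bound_iss_start}, which yields
\begin{equation*}
\|e_{k+1}\| \leq \eta^{2^\ell-1} 2^{2^{\ell-1}}\left(\|e_k\|^{2^\ell} + (c\|\Delta x_k\|^2)^{2^\ell}\right).
\end{equation*}

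Next I would read off the Lyapunov inequality $V(G_\ell(\cdot)) - V(\cdot) \le -c_V V + \sigma(\|\Delta x\|)$ required by Theorem~\ref{thm:ISS_gain}. I would set the disturbance term to $\sigma(s) = \eta^{2^\ell-1}2^{2^{\ell-1}}(cs^2)^{2^\ell} = \tfrac{1}{\eta}(\sqrt{2}\,\eta c s^2)^{2^\ell}$, which is class $\K$ in $s$, and make the pure-error term dominate $(1-c_V)\|e_k\|$ by restricting $\|e_k\|$: this is exactly the role of $\eps_1(\ell)$, defined as the largest radius for which $\eta^{2^\ell-1}2^{2^{\ell-1}}\|e_k\|^{2^\ell-1} \le 1 - c_V$. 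Inspecting the exponent $2^{\ell-1}/(2^\ell-1)$ shows this radius is monotonically increasing in $\ell$, as claimed. The condition $2\eta\min(\eps_1(\ell),\eps_2) < 1$ guarantees that the base $\eta(\|e_k\| + c\|\Delta x_k\|^2)$ of the doubly-exponential term is strictly below one so the map is genuinely contractive, while the class-$\Kinf$ function $\rho(s) = c s^2$ converts the admissible disturbance size into an error-sized bound, so that $\|\Delta \mathbf{x}_k\| \le \rho^{-1}(\min(\eps_1(\ell),\eps_2))$ ensures $c\|\Delta x_k\|^2 \le \min(\eps_1(\ell),\eps_2)$.

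With these choices I would verify invariance of the sublevel set $\{\|e\| \le \min(\eps_1(\ell),\eps_2)\}$: the contraction, together with the disturbance bound and $2\eta\min(\eps_1(\ell),\eps_2) < 1$, keeps the iterates both inside this set and inside the validity region of \eqref{eq:sspc_bounds}, so the local Lyapunov inequality holds along every trajectory. Theorem~\ref{thm:ISS_gain} then delivers LISS directly with $\alpha_1(s) = \alpha_2(s) = s$. For the gain the same theorem gives $\gamma(s) \le \sigma(s)/((1-c_0)c_V)$, and substituting $\sigma(s) = \tfrac{1}{\eta}(\sqrt{2}\,\eta c s^2)^{2^\ell}$ shows that, provided $\sqrt{2}\,\eta c s^2 < 1$, the gain decreases to zero as $\ell \to \infty$. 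This is precisely where $\eps_3$ enters: choosing $\eps_3$ so that $\sqrt{2}\,\eta c \eps_3^2 < 1$ forces the base below one for all admissible disturbances, whence $\gamma_2(s,\ell) := \sigma(s)/((1-c_0)c_V)$ is class $\K$ in $s$ for each fixed $\ell$ and tends to $0$ as $\ell \to \infty$, i.e. $\gamma_2 \in \KL$.

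The main obstacle, in my view, is not any single estimate but the simultaneous and self-consistent selection of the $\ell$-dependent quantities $\eps_1(\ell)$, $c_V$, and $\sigma$ so that the sublevel set is invariant, the strict decrease holds, and $\eps_1(\ell)$ emerges monotone. One must reconcile the doubly-exponential ($2^\ell$-power) error contraction of the corrector with the ordinary geometric decrease demanded by Theorem~\ref{thm:ISS_gain}; the key maneuver is Lemma~\ref{lmm:pow2} followed by the rescaling that exposes $(\sqrt{2}\,\eta c s^2)^{2^\ell}$, which both linearizes the bound in $V = \|e\|$ and makes the $\ell \to \infty$ behavior of the gain transparent once $\eps_3$ caps the disturbance.
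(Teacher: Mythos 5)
Your proposal follows essentially the same route as the paper's proof: start from the composite error bound \eqref{eq:err_bound_iss_start}, split it with Lemma~\ref{lmm:pow2}, use $V(e)=\|e\|$ as the ISS Lyapunov candidate, restrict $\|e\|$ to an $\ell$-dependent ball so the higher-order term is dominated and the dissipation inequality of Theorem~\ref{thm:ISS_gain} holds, and then observe that the resulting $\sigma(s,\ell)=\tfrac{1}{\eta}(\text{base})^{2^\ell}$ is $\KL$ once the base is capped below one by $\eps_3$. The only differences are bookkeeping choices (your $\eps_1(\ell)$ comes from a threshold $1-c_V$ rather than the maximizer of $\alpha(\cdot,\ell)$, and the paper takes $\rho=\bar a^{-1}\sigma$ rather than $cs^2$ to enforce invariance of the small ball), none of which change the substance of the argument.
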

\begin{proof}
We begin with \eqref{eq:err_bound_iss_start}; using Lemma~\ref{lmm:pow2} and performing some algebraic manipulations we obtain,
\begin{align}
||e^+|| &\leq \eta^{2^\ell -1}(||e|| + c||\Delta x||^2)^{2^\ell},\\
 & \leq \frac{1}{2\eta} (2\eta||e||)^{2^\ell} + \frac{1}{2\eta} (2c\eta)^{2^\ell}||\Delta x||^{2^\ell +1},
\end{align}
where $e^+,e$, and $\Delta x$ are shorthand for $e_{k+1},e_k$, and $\Delta x_k$. Consider the candidate ISS Lyapunov function $||\cdot|| \in \Kinf$. After further algebraic manipulations, we have that
\begin{multline}
||e^+||- ||e|| \leq -||e|| \left[1 - \frac{1}{2\eta}(2\eta||e||)^{2^\ell}\right] +\\
\frac{1}{2\eta} (2c\eta||\Delta x||)^{2^\ell}||\Delta x||^2,
\end{multline}
\begin{equation}
 = -\alpha(||e||,\ell) + \sigma(||\Delta x||,\ell).
\end{equation}
Consider the function $\alpha(r,\ell)$, by analyzing the equation
\begin{equation}
	\frac{\partial\alpha}{\partial r}(r,\ell) = \alpha'(r,\ell) = 1- \frac{1}{2\eta} (1+2^\ell)(2\eta r)^{2^\ell} = 0,
\end{equation}
we see that, for a fixed $\ell$, $\alpha$ is increasing and thus of class $\K$ on the domain $\mathcal{D} = [0,\bar{\eps}(\ell))$ where
\begin{equation}
	\bar{\eps}(\ell) = (2\eta)^{\frac{1}{2^\ell}-1} \left(\frac{1}{1+2^\ell}\right)^{\frac{1}{2^\ell}}.
\end{equation}
For any fixed $\ell > 0$ the term
\begin{equation}
	a(||e||,\ell) = 1 - \frac{1}{2\eta}(2\eta||e||)^{2^\ell},
\end{equation}
is positive on the interval $[0,(2\eta)^{\frac{1}{2^\ell} -1}) \subseteq \mathcal{D}$ and thus satisfies the inequality
\begin{equation}
	0 \leq a(\bar{\eps},\ell) \leq a(r,\ell) \leq a(0,\ell),~~ \forall r \in \mathcal{D}.
\end{equation}
Pick an arbitrary scalar $\tau \in (0, \bar{\eps})$, then
\begin{align}
	||e^+||- ||e|| &\leq -\alpha(||e||,\ell) + \sigma(||\Delta x||,\ell),\\
	& \leq -\bar{a}||e|| + \sigma(||\Delta x||,\ell),\\
	& = -\bar{a}||e|| + \sigma(||\Delta x||,\ell),
\end{align}
where $\bar{a} = a(\bar{\eps} - \tau,\ell)$. Its clear that $\sigma \in \mathcal{K}_\infty$ for any fixed $\ell$ thus the dissipation inequlity in Theorem~\ref{thm:ISS_gain} holds provided $||e_k|| < \bar{\eps}(\ell)$. Applying Theorem~\ref{thm:ISS_gain} we obtain that, for any $c_0 \in (0,1)$, the ISS gain can be chosen to satisfy $\gamma(s,\ell) \leq  c_1 \sigma(s,\ell)$, where $c_1^{-1} = (1-c_0)\bar{a}$. Since \eqref{eq:error_system} is LISS for $||e_k||$ sufficiently small, we can recursively enforce that $||e_k|| < \bar{\eps}(\ell)~\forall k\in \mathbb{Z}_+$ by restricting $||e_0||\leq \bar{\eps}(\ell)$ and $\bar{a}^{-1}\sigma(||\Delta \mathbf{x}_k||,\ell) \leq \bar{\eps}(\ell)$ \cite[Remark 3.7]{jiang2001input}. Similarly, to enforce that $(e_k,\Delta x_k) \in \mathcal{E}\times \mathcal{P},~\forall k\in \mathbb{Z}_+$, (see Remark~\ref{rmk:err_bnd}) we also restrict the initial condition and input to satisfy $||e_0||\leq \eps_2$ and $\bar{a}^{-1}\sigma(||\Delta \mathbf{x}_k||,\ell) \leq \eps_2$ where $\eps_2$ is chosen small enough so that a ball of radius $\eps_2$ is contained within $\mathcal{E}$ and $\mathcal{P}$. This is possible since they are neighbourhoods of the origin. In addition, if $2c\eta r < 1$ then $\sigma(r,\ell) \in \KL$. Letting $\eps_1(\ell)= \bar{\eps}(\ell)$, $\eps_3 = \frac{1}{2c\eta}$, and $\rho = \bar{a}^{-1}\sigma$  completes the proof.
\end{proof}

\section{ISS properties of suboptimal MPC} \label{ss:ISS_MPC}
Theorem~\ref{thm:ISSproof} illustrates that, under some conditions, the SSPC method, viewed as a dynamic system driven by parameter changes, is LISS and has an asymptotic gain which can be made arbitrarily small by performing more iterations. Since the ideal closed loop system \eqref{eq:dyn_cl} is LISS, we can treat the sub-optimality error as a disturbance and derive sufficient conditions for the stability of the interconnection between the SSPC and the plant using small gain arguments.
\begin{thm} \label{prp:small_gain}
Consider the interconnected dynamic systems
\begin{subequations}  \label{eq:iss_sys}
\begin{equation} \label{eq:sys1}
	\Sigma_1:\begin{cases}
	~~ x_{k+1} = f(x_k,d_k),\\
	~~ \Delta x_k = h(x_k,d_k)
	\end{cases}
\end{equation}
\begin{equation} \label{eq:sys2}
	\Sigma_2:\begin{cases}
	~~e_{k+1} = G_\ell(e_k,\Delta x_k),\\
	~~ d_k = He_k
	\end{cases}
\end{equation}
\end{subequations}
where $f(x,d) = f_d(x,k_{mpc}(x)+d)$, $f_d$ is defined in \eqref{eq:dyn_system},  $h(x,d) = f(x,d) - x$, and $G_\ell$ is defined in \eqref{eq:error_system}. Let Assumptions \ref{ass:ss_properties} - \ref{ass:pointwise_strong_reg} and the assumptions of Theorem~\ref{thm:ISSproof} hold. Then there exists $\ell^* > 0$ such that, if $\ell\geq\ell^*$, the origin is an asymptotically stable equilibrium point for \eqref{eq:iss_sys} whose region of attraction satisfies $\mathcal{R} \subset \Gamma \times \mathcal{E}$.
\end{thm}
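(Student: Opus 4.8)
The plan is to recognize \eqref{eq:iss_sys} as a feedback interconnection of two locally ISS subsystems and to close the loop with the nonlinear small-gain theorem of \cite{jiang2004nonlinear}, exploiting the fact that the gain of $\Sigma_2$ can be driven to zero by increasing $\ell$.

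First I would pin down the two asymptotic gains appearing in the loop. For $\Sigma_2$, Theorem~\ref{thm:ISSproof} already supplies a gain $\gamma_2(\cdot,\ell)$ from $\Delta x$ to $e$ that is of class $\KL$ in $\ell$; since the output is the linear map $d = He$, the asymptotic gain $\gamma_{21}$ from $\Delta x$ to $d$ satisfies $\gamma_{21}(s,\ell) \leq ||H||\,\gamma_2(s,\ell)$, and hence $\gamma_{21}(s,\ell)\to 0$ as $\ell\to\infty$ for each fixed $s$. For $\Sigma_1$ I need a gain from $d$ to the output $\Delta x = h(x,d) = f(x,d)-x$. Theorem~\ref{rmk:MPC_stab} states that the plant is LISS from $d$ to $x$ on the RPI set $\Omega$, so it admits an asymptotic gain $\gamma_x\in\K$. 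Because $f_d$ is $\mathcal{C}^2$ (Assumption~\ref{ass:ocp_continuity}) and $k_{mpc}=HS$ is locally Lipschitz near the origin by Theorem~\ref{thm:lipschitz_param}, while $h(0,0)=f(0,0)=0$ by Assumption~\ref{ass:ss_properties}, a local Lipschitz estimate $||\Delta x|| \leq L_x||x|| + L_d||d||$ holds in a neighbourhood of the origin; composing this with $\gamma_x$ yields an asymptotic gain $\gamma_{12}(s)=L_x\gamma_x(s)+L_d s \in\K$ from $d$ to $\Delta x$.

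Second, I would verify the small-gain condition and select $\ell^*$. The loop gain of the $d\to\Delta x\to d$ cascade is $\gamma_{21}(\cdot,\ell)\circ\gamma_{12}$, and the small-gain theorem \cite{jiang2004nonlinear} requires $\gamma_{21}(\gamma_{12}(s),\ell) < s$ for all $s$ in the relevant bounded range. Since $\gamma_{12}$ is a fixed $\K$ function and $\gamma_{21}(\cdot,\ell)\to 0$, monotonicity together with pointwise convergence gives uniform smallness on the compact $s$-interval dictated by the local neighbourhoods, so there is an $\ell^*$ for which the composition is a strict contraction whenever $\ell\geq\ell^*$; the theorem then certifies asymptotic stability of the origin.

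The main obstacle, and where most of the effort lies, is the bookkeeping of the local domains, since both subsystems are only LISS. I must guarantee that on the candidate region the plant state stays in $\Omega$ so Theorem~\ref{rmk:MPC_stab} applies, that the error obeys $||e_k|| < \text{min}(\eps_1(\ell),\eps_2)$ and the increment satisfies $\Delta x_k\in\mathcal{P}$ so that Theorem~\ref{thm:ISSproof} and the bound \eqref{eq:sspc_bounds} remain valid, and simultaneously that $d_k=He_k$ is small enough for the plant's LISS estimate to hold. These requirements are coupled through the interconnection, yet compatible, because increasing $\ell$ both enlarges the admissible error set through the monotone $\eps_1(\ell)$ and shrinks $\gamma_{21}$, reducing the disturbance fed back to the plant. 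I would therefore build $\mathcal{R}\subset\Gamma\times\mathcal{E}$ as a sublevel set of a composite gain certificate on which all of these bounds hold invariantly, and conclude local asymptotic stability with region of attraction $\mathcal{R}$ via the local form of the small-gain theorem.
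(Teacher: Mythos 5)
Your proposal is correct and follows essentially the same route as the paper: both establish asymptotic gains for the two LISS subsystems (the plant via Theorem~\ref{rmk:MPC_stab} plus Lipschitz continuity of the output map, the optimizer via Theorem~\ref{thm:ISSproof}), invoke the nonlinear small-gain theorem of \cite{jiang2004nonlinear}, and use the fact that $\gamma_2(\cdot,\ell)\in\KL$ can be made arbitrarily small as $\ell\to\infty$ to guarantee the existence of $\ell^*$. Your version is somewhat more explicit about the output Lipschitz estimate and the local-domain bookkeeping, which the paper compresses into the final definition of $\mathcal{R}$, but the argument is the same.
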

\begin{proof}
Under Assumptions \ref{ass:ss_properties}-\ref{ass:pointwise_strong_reg} $\Sigma_1$ is LISS by Theorem~\ref{rmk:MPC_stab} and thus admits an asymptotic gain from $d$ to $x$ \cite[Lemma 3.8]{jiang2001input}. Since the output equation is Lipschitz continuous this implies that there exists $\gamma_1 \in \K$ such that 
\begin{equation}
	\limsupk ||\Delta x_k|| \leq \gamma_1\left(\limsupk ||d_k||\right).
\end{equation}
Similarly by Theorem~\ref{thm:ISSproof} we have that $\Sigma_2$ is LISS and that there exists $\gamma_2\in \KL$ such that 
\begin{equation}
 	\limsupk ||d_k|| \leq \gamma_2\left(\limsupk||\Delta x_k||,\ell\right).
\end{equation} 
Thus using small gain arguments, see e.g., \cite{jiang2004nonlinear}, we have that \eqref{eq:iss_sys} is contractive in a neighbourhood of the origin provided $\gamma_1 \circ \gamma_2(s,\ell) < s, ~\forall s \geq 0$. Since $\gamma_2 \in \KL(s,\ell)$ it can be made arbitrarily small by letting $\ell \to \infty$. It follows from the finiteness of $\gamma_1$ that there exists $\ell^*$ such that the small gain condition is satisfied. To conclude the proof we define $\mathcal{R}$ as the set of initial conditions under which $(x_k,e_k) \in \Gamma \times \mathcal{E}$ for all $k\in \mathbb{Z}_+$.
\end{proof}
Theorem~\ref{prp:small_gain} establishes asymptotic stability of the interconnected plant-optimizer system but doesn't consider constraint satisfaction. Since the closed-loop system under ideal feedback law is LISS on an admissible RPI set we can derive sufficient conditions for constraint satisfaction that are summarized in the following theorem.

\begin{thm} \label{thm:constraint_satisfaction}
Suppose that the assumptions of Theorem~\ref{prp:small_gain} hold so the interconnected system \eqref{eq:iss_sys} is LISS. Let $\Omega$ denote the admissible RPI set in Theorem~\ref{rmk:MPC_stab}, let $\gamma_2(s,\ell) \in \KL$ upper bound $\gamma(s)$, the asymptotic gain of \eqref{eq:sys2}, and let $(x_k,e_k)$ denote the closed-loop trajectory of \eqref{eq:iss_sys} for some initial condition $(x_0,e_0)$. Then there exists $\bar{\ell} \geq \ell^*$ and $\delta > 0$ such that if $||e_0|| \leq \delta$, and $x_0\in \Omega$ then $x_k \in \Omega$ for all $k\geq 0$.
\end{thm}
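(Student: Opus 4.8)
The plan is to show the plant state never leaves $\Omega$ by exploiting the RPI property together with a bound, maintained inductively, on the suboptimality disturbance $d_k = He_k$. First I would recall that, by Theorem~\ref{rmk:MPC_stab} and the admissible-RPI framework of \cite{limon2009input}, there is a disturbance radius $\bar d > 0$ such that $\Omega$ is RPI with respect to $D = \{d : \|d\| \le \bar d\}$ and all constraints hold on $\Omega$; the RPI property persists for any smaller radius, so $\bar d$ may be taken as small as convenient. Setting $\bar e = \bar d / \|H\|$, the goal reduces to guaranteeing $\|e_k\| \le \bar e$ for every $k$, since then $\|d_k\| \le \bar d$ and invariance of $\Omega$ follows directly from the RPI definition.

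Second I would record a uniform bound on the input driving $\Sigma_2$. Since $\Omega$ and $D$ are compact and $f$ is continuous, $\|\Delta x\| = \|f(x,d) - x\| \le \bar\Delta$ for all $(x,d) \in \Omega \times D$, with $\bar\Delta$ a fixed constant. I would then select $\ell$ and $\bar d$ so that $\bar\Delta$ lies inside the LISS validity region of Theorem~\ref{thm:ISSproof}, i.e. $\bar\Delta \le \rho^{-1}(\min(\eps_1(\ell),\eps_2))$ and $\bar\Delta < \eps_3$, so that both the $\KL$ transient bound and the $\KL$ gain bound $\gamma_2(\cdot,\ell)$ are available. Increasing $\ell$ enlarges this region, because $\eps_1(\ell)$ is increasing, up to the fixed ceiling $\rho^{-1}(\eps_2)$.

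The core is then a simultaneous induction on $k$ establishing $x_k \in \Omega$ and $\|e_k\| \le \bar e$. The base case holds since $x_0 \in \Omega$ and $\|e_0\| \le \delta \le \bar e$. For the step, assuming the pair stays in $\Omega \times \{\|e\| \le \bar e\}$ through time $k$, I obtain $\|d_j\| \le \bar d$ for $j \le k$, hence $x_{k+1} \in \Omega$ by the RPI property and $\|\Delta x_j\| \le \bar\Delta$. Feeding this into the LISS estimate of $\Sigma_2$ (Definition~\ref{def:LISS}, with gain bounded by $\gamma_2(\cdot,\ell)$, and letting $\beta_2 \in \KL$ denote its transient bound) yields
\[
\|e_{k+1}\| \le \max\!\big[\beta_2(\|e_0\|,k+1),\ \gamma_2(\bar\Delta,\ell)\big] \le \max\!\big[\beta_2(\delta,0),\ \gamma_2(\bar\Delta,\ell)\big].
\]
Choosing $\delta$ small enough that $\beta_2(\delta,0) \le \bar e$ (possible since $\beta_2(\cdot,0) \in \K$) and $\bar\ell$ large enough that $\gamma_2(\bar\Delta,\bar\ell) \le \bar e$ (possible since $\gamma_2(s,\cdot) \to 0$ as $\ell \to \infty$) closes the induction; taking $\bar\ell \ge \ell^*$ additionally preserves the asymptotic stability of Theorem~\ref{prp:small_gain}.

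I expect the main obstacle to be the circular dependency together with the compatibility of $\bar\Delta$ with the LISS validity region. The circularity — bounding $\Delta x$ presumes $x \in \Omega$, which presumes $\|d\| = \|He\|$ small, which presumes $\|e\|$ small, which in turn requires $\Delta x$ bounded — is precisely what the joint induction on $(x_k,e_k)$ resolves. The genuinely delicate point is that $\bar\Delta$ has a nonzero floor $\max_{x\in\Omega}\|f(x,0)-x\|$ fixed by the ideal closed loop over all of $\Omega$, so the argument only goes through when this floor sits below the warmstart tolerance $\rho^{-1}(\eps_2)$; enlarging $\ell$ helps only up to that ceiling, which couples the admissible size of $\Omega$ to the method's tolerance for inter-sample parameter jumps.
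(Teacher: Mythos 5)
Your proof is correct and follows essentially the same route as the paper's: both reduce constraint satisfaction to keeping $\|d_k\|=\|He_k\|$ inside the disturbance set for which $\Omega$ is RPI, bound the input $\Delta x$ to $\Sigma_2$ using the boundedness of $\Omega$, and then pick $\delta$ via the transient bound $\beta(\cdot,0)$ and $\bar\ell$ via the vanishing $\KL$ gain. Your explicit joint induction on $(x_k,e_k)$ and your remark that $\bar\Delta$ must fit inside the LISS validity region (the $\rho^{-1}(\eps_2)$ ceiling from Theorem~\ref{thm:ISSproof}) make the circular dependency and the implied size restriction on $\Omega$ explicit, both of which the paper's proof leaves implicit.
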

\begin{proof}
By Theorem~\ref{rmk:MPC_stab}, there exists a neighbourhood $D$ of the origin such that, if $d_k \in D, ~\forall k \geq 0,$ and $x_0 \in \Omega$, then $x_k \in \Omega,~\forall k\geq 0$. Since $d = H e$ for a fixed matrix $H$, this implies the existence of $\rho> 0$ such that if $||e_k|| \leq \rho$ then $d_k \in D$. Further, \eqref{eq:iss_sys} is LISS, so we can enforce $||e_k|| \leq \rho$ by noting that,
\begin{equation}
	||e_k|| \leq \max\left\{\beta(||e_0||,k), \gamma_2\left(\underset{0\leq j \leq k}{max}~||\Delta x_j||,\ell\right)\right\},
\end{equation}
by the definition of LISS. Thus we must impose that 
\begin{equation}
	||e_0|| \leq \beta_0^{-1}(\rho),
\end{equation}
where $\beta_0(\cdot) = \beta(\cdot,0)\in \K$, and
\begin{equation}
	\gamma_2\left(\underset{0\leq j \leq k}{max}~||\Delta x_j||,\ell\right) \leq \rho
\end{equation}
for all $\Delta x \in \Delta \Omega = \Omega - \Omega$. As proven in \cite[Theorem 4]{limon2009input}, the set $\Omega$ is bounded, implying that there exists
\begin{equation}
	\bar{s} = \underset{w\in \Delta \Omega}{\sup}~||w|| < \infty.
\end{equation}
Since $\bar{s}$ is finite and $\gamma_2 \in \KL$ there must exist some $\ell_1$ such that $\gamma_2(\bar{s},\ell_1) \leq \rho$. Letting $\delta = \beta_0^{-1}(\rho)$ and $\bar{\ell} = \max(\ell^*,\ell_1)$ completes the proof.
\end{proof}

Theorem~\ref{thm:constraint_satisfaction} establishes that, if enough computational resources are available and the initial solution guess is sufficiently accurate, then constraints are guaranteed to be satisfied.

\section{Numerical examples} \label{ss:num_ex}
In this section we illustrate our theoretical results with a numerical example. The attitude dynamics of a rigid spacecraft are given by the Euler equations,
\begin{equation}
	\dot{x} = f_c(x,u) = \begin{bmatrix}
		J^{-1}(-\omega^\times J \omega + u)\\
		S(\theta) \omega
	\end{bmatrix},
\end{equation}
where $\omega\in \reals^3$ is the vector of angular velocities expressed in a body fixed frame, $\theta$ is the vector or 3-2-1 Euler angles, $x = [\omega^T~\theta^T]^T$ is the state vector, $J = diag(918,920,1365)$, is the inertia matrix, $u \in \reals^3$ are external control moments and
\begin{equation}
	S(\theta) = \begin{bmatrix}
		1 & \sin(\theta_1) \tan(\theta_2) & \cos(\theta_1)\tan(\theta_2) \\
		0 & \cos(\theta_1) & -\sin(\theta_1)\\
		0 & \sin(\theta_1) \sec(\theta_2) & \cos(\theta_1) \sec(\theta_2)
	\end{bmatrix}.
\end{equation} The control objective is to drive the system from $x(0) = [0~~0~~0~~15^\circ~~30^\circ~~-20^\circ]$ to the origin. We discretize the dynamics using explicit Euler integration with a timestep of $\tau = 3~s$. The system is placed in closed-loop with a suboptimal MPC controller implemented using SSPC. The optimal control problem \eqref{eq:OCP} to be solved at time $t_k$ is
\begin{subequations}
\begin{align}
\underset{\xi,u}{min.}~~&||\xi_N||_P^2  + \sum_{i=0}^{N-1} ||\xi_i||_Q^2 + ||u||_R^2\\
s.t. ~~& \xi_{i+1} = f_p(\xi_i,u_i), ~ i = 0, \ldots ,N-1,\\
&|\omega_i| \leq 0.02, ~ i = 1,\ldots,N,\\
&|u_i| \leq 2, ~ i = 0,\ldots, N-1,\\
&A_f \xi_N \leq b_f, ~~ \xi_0 = x(t_k)
\end{align}
\end{subequations}
where $f_p(x,u) = x + \tau f_c(x,u)$, and $N= 30$ is the prediction horizon. We obtain $P$ as the solution of the discrete time algebraic Riccati equation with the dynamics linearized about the origin\footnote{$Q = 50~diag([10~10~10~1~1~1])$, $R = 0.1 I_{3\times 3}$}. The terminal control invariant set $X_f = \{x~|~ A_f x \leq b_f\}$ is computed using the MPT3 toolbox \cite{MPT3}, derivatives are computed using CASADI\cite{Andersson2018} and the solution estimate is initialized at the origin \footnote{The initial guess has been purposefully chosen to be poor, the initial residual is on the order of 1000.}.

Closed-loop simulation results are shown in Figure~\ref{fig:sim_traces}. Only one corrector iteration per timestep is needed for stability, however constraints are not satisfied. When $\ell$ is increased to 2 both the state and control constraints are satisfied as predicted by Theorem~\ref{thm:constraint_satisfaction}. Figure~\ref{fig:residuals} shows the KKT residual $||F(z,x)||$, which upper and lower bounds the error \cite{qi1997semismooth}, and the closed-loop cost function. The maximum wall clock execution time\footnote{Performed using default settings on a 2015 Macbook pro with a 2.7 GHz i7 CPU and 16 GB of RAM running MATLAB 2017b.} for SSPC, implemented in native MATLAB code, for the $\ell = 2$ case was $0.0371 s$, compared to $1.41 s$ for \texttt{fmincon} and $0.9701 s$ for \texttt{ipopt}.

\begin{figure}
	\centering
	\includegraphics[width=0.95\columnwidth]{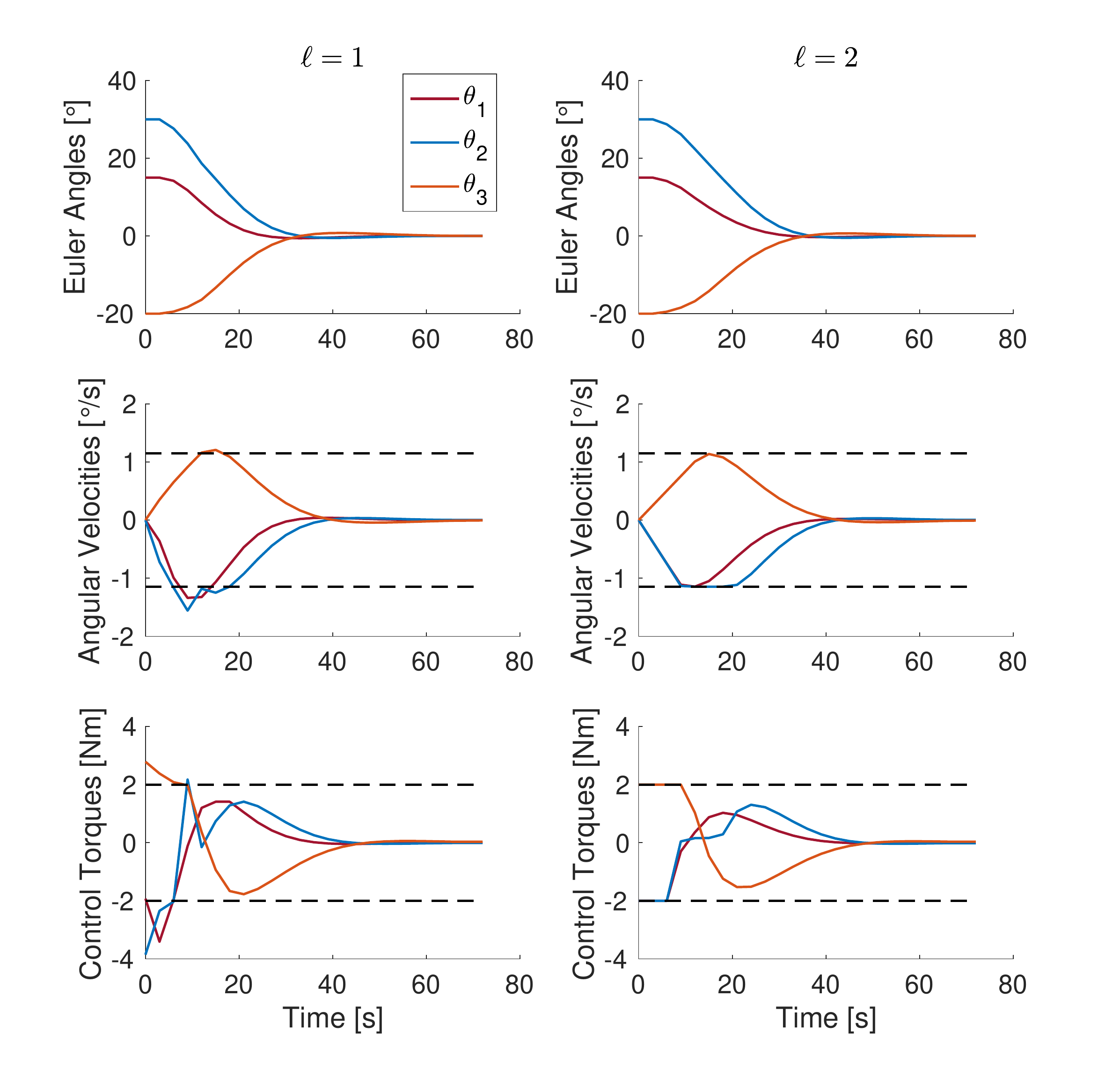}
	\caption{Simulated closed-loop trajectories of the spacecraft in loop with suboptimal MPC with one iteration (left) and two iterations (right).}
	\label{fig:sim_traces}
\end{figure}

\begin{figure}
	\centering
	\includegraphics[width=0.95\columnwidth]{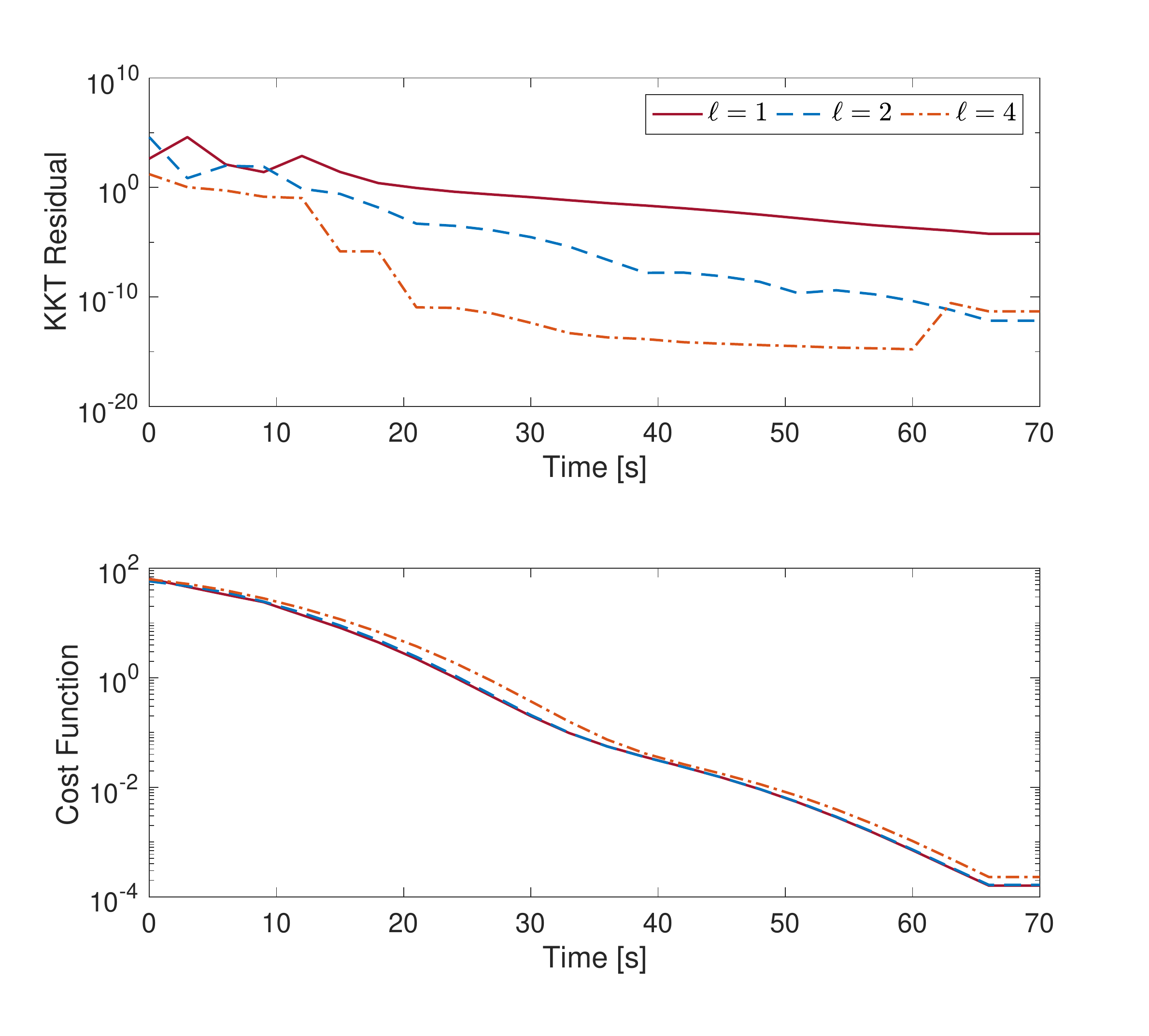}
	\caption{Closed-loop residual and MPC cost function for 1,2, and 4 Newton iterations per timestep. The residual in the $\ell = 1$ case decays to machine precision after about 250 s. }
	\label{fig:residuals}
\end{figure}
\section{Conclusions}
In this paper we introduced a suboptimal MPC method based on the SSPC algorithm. We established conditions under which the SSPC algorithm, viewed as a dynamic system, is LISS. We also establish sufficient conditions for stability of the combined SSPC-plant system using small gain theorem based arguments and sufficient conditions for constraint satisfaction. Numerical simulations show that stability can be achieved with only one corrector iteration per time step even in the presence of a large initial estimate error. Constraint enforcement requires either more iterations or a better initial estimate. Future work includes an investigation of the effect of the sampling period, and the design of robust MPC controllers tailored for implementation using SSPC.
\bibliography{iss_mpc}

\begin{thebibliography}{10}

\bibitem{grune2017nonlinear}
L.~Gr{\"u}ne and J.~Pannek, ``Nonlinear model predictive control,'' in {\em
  Nonlinear Model Predictive Control}, pp.~45--69, Springer, 2017.

\bibitem{rawlings2009model}
J.~B. Rawlings and D.~Q. Mayne, {\em Model predictive control: Theory and
  design}.
\newblock Nob Hill Pub., 2009.

\bibitem{mayne2014model}
D.~Q. Mayne, ``Model predictive control: Recent developments and future
  promise,'' {\em Automatica}, vol.~50, no.~12, pp.~2967--2986, 2014.

\bibitem{scokaert1999suboptimal}
P.~O. Scokaert, D.~Q. Mayne, and J.~B. Rawlings, ``Suboptimal model predictive
  control (feasibility implies stability),'' {\em IEEE Transactions on
  Automatic Control}, vol.~44, no.~3, pp.~648--654, 1999.

\bibitem{ALLAN201768}
D.~A. Allan, C.~N. Bates, M.~J. Risbeck, and J.~B. Rawlings, ``On the inherent
  robustness of optimal and suboptimal nonlinear {MPC},'' {\em Systems and
  Control Letters}, vol.~106, pp.~68 -- 78, 2017.

\bibitem{pannocchia2011conditions}
G.~Pannocchia, J.~B. Rawlings, and S.~J. Wright, ``Conditions under which
  suboptimal nonlinear {MPC} is inherently robust,'' {\em Systems \& Control
  Letters}, vol.~60, no.~9, pp.~747--755, 2011.

\bibitem{grune2010analysis}
L.~Gr{\"u}ne and J.~Pannek, ``Analysis of unconstrained {NMPC} schemes with
  incomplete optimization,'' in {\em Proceedings of the 8th IFAC Symposium on
  Nonlinear Control Systems--NOLCOS}, pp.~238--243, 2010.

\bibitem{graichen2010stability}
K.~Graichen and A.~Kugi, ``Stability and incremental improvement of suboptimal
  {MPC} without terminal constraints,'' {\em IEEE Transactions on Automatic
  Control}, vol.~55, no.~11, pp.~2576--2580, 2010.

\bibitem{liao2018embedding}
D.~Liao-McPherson, M.~M. Nicotra, and I.~V. Kolmanovsky, ``Embedding
  constrained model predictive control in a continuous-time dynamic feedback,''
  {\em IEEE Transactions on Automatic Control}, 2018.

\bibitem{graichen2012fixed}
K.~Graichen, ``A fixed-point iteration scheme for real-time model predictive
  control,'' {\em Automatica}, vol.~48, no.~7, pp.~1300--1305, 2012.

\bibitem{steinboeck2017design}
A.~Steinboeck, M.~Guay, and A.~Kugi, ``A design technique for fast sampled-data
  nonlinear model predictive control with convergence and stability results,''
  {\em International Journal of Control}, pp.~1--17, 2017.

\bibitem{diehl2005real}
M.~Diehl, H.~G. Bock, and J.~P. Schl{\"o}der, ``A real-time iteration scheme
  for nonlinear optimization in optimal feedback control,'' {\em SIAM Journal
  on control and optimization}, vol.~43, no.~5, pp.~1714--1736, 2005.

\bibitem{diehl2005nominal}
M.~Diehl, R.~Findeisen, F.~Allg{\"o}wer, H.~G. Bock, and J.~P. Schl{\"o}der,
  ``Nominal stability of real-time iteration scheme for nonlinear model
  predictive control,'' {\em IEE Proceedings-Control Theory and Applications},
  vol.~152, no.~3, pp.~296--308, 2005.

\bibitem{paternain2018prediction}
S.~Paternain, M.~Morari, and A.~Ribeiro, ``A prediction-correction method for
  model predictive control,'' in {\em 2018 Annual American Control Conference
  (ACC)}, pp.~4189--4194, IEEE, 2018.

\bibitem{zavala2009advanced}
V.~M. Zavala and L.~T. Biegler, ``The advanced-step {NMPC} controller:
  Optimality, stability and robustness,'' {\em Automatica}, vol.~45, no.~1,
  pp.~86--93, 2009.

\bibitem{liaomcpherson2018}
D.~Liao-McPherson, M.~Nicotra, and I.~Kolmanovsky, ``A semismooth predictor
  corrector method for real-time constrained parametric optimization with
  applications in model predictive control,'' in {\em 2018 IEEE Conference on
  Decision and Control (CDC)}, Dec 2018.

\bibitem{jiang2001input}
Z.-P. Jiang and Y.~Wang, ``Input-to-state stability for discrete-time nonlinear
  systems,'' {\em Automatica}, vol.~37, no.~6, pp.~857--869, 2001.

\bibitem{jiang2004nonlinear}
Z.-P. Jiang, Y.~Lin, and Y.~Wang, ``Nonlinear small-gain theorems for
  discrete-time feedback systems and applications,'' {\em Automatica}, vol.~40,
  no.~12, pp.~2129--2136, 2004.

\bibitem{limon2009input}
D.~Limon, T.~Alamo, D.~Raimondo, D.~M. De~La~Pe{\~n}a, J.~Bravo, A.~Ferramosca,
  and E.~Camacho, ``Input-to-state stability: a unifying framework for robust
  model predictive control,'' in {\em Nonlinear model predictive control},
  pp.~1--26, Springer, 2009.

\bibitem{dontchev2009implicit}
A.~L. Dontchev and R.~T. Rockafellar, ``Implicit functions and solution
  mappings,'' {\em Springer Monogr. Math.}, 2009.

\bibitem{zavala2010real}
V.~M. Zavala and M.~Anitescu, ``Real-time nonlinear optimization as a
  generalized equation,'' {\em SIAM Journal on Control and Optimization},
  vol.~48, no.~8, pp.~5444--5467, 2010.

\bibitem{hours2016parametric}
J.-H. Hours and C.~N. Jones, ``A parametric nonconvex decomposition algorithm
  for real-time and distributed {NMPC},'' {\em IEEE Transactions on Automatic
  Control}, vol.~61, no.~2, pp.~287--302, 2016.

\bibitem{dinh2012adjoint}
Q.~T. Dinh, C.~Savorgnan, and M.~Diehl, ``Adjoint-based predictor-corrector
  sequential convex programming for parametric nonlinear optimization,'' {\em
  SIAM Journal on Optimization}, vol.~22, no.~4, pp.~1258--1284, 2012.

\bibitem{dontchev2013euler}
A.~L. Dontchev, M.~Krastanov, R.~T. Rockafellar, and V.~M. Veliov, ``An
  {Euler}--{N}ewton continuation method for tracking solution trajectories of
  parametric variational inequalities,'' {\em SIAM Journal on Control and
  Optimization}, vol.~51, no.~3, pp.~1823--1840, 2013.

\bibitem{bieglerIFAC}
L.~T. Bigler and D.~M. Thierry, ``Large-scale optimization formulations and
  strategies for nonlinear model predictive control,'' in {\em 6th IFAC
  Conference on Nonlinear Model Predictive Control}, 2018.

\bibitem{yang2015nonlinear}
X.~Yang, D.~W. Griffith, and L.~T. Biegler, ``Nonlinear programming properties
  for stable and robust nmpc,'' {\em IFAC-PapersOnLine}, vol.~48, no.~23,
  pp.~388--397, 2015.

\bibitem{sun1999ncp}
D.~Sun and L.~Qi, ``On {NCP}-functions,'' {\em Computational Optimization and
  Applications}, vol.~13, no.~1-3, pp.~201--220, 1999.

\bibitem{fischer1992special}
A.~Fischer, ``A special {N}ewton-type optimization method,'' {\em
  Optimization}, vol.~24, no.~3-4, pp.~269--284, 1992.

\bibitem{qi1993nonsmooth}
L.~Qi and J.~Sun, ``A nonsmooth version of {N}ewton's method,'' {\em
  Mathematical programming}, vol.~58, no.~1, pp.~353--367, 1993.

\bibitem{clarke1990optimization}
F.~H. Clarke, {\em Optimization and nonsmooth analysis}.
\newblock SIAM, 1990.

\bibitem{qi1997semismooth}
L.~Qi and H.~Jiang, ``Semismooth {K}arush-{K}uhn-{T}ucker equations and
  convergence analysis of {N}ewton and quasi-{N}ewton methods for solving these
  equations,'' {\em Mathematics of Operations Research}, vol.~22, no.~2,
  pp.~301--325, 1997.

\bibitem{MPT3}
M.~Herceg, M.~Kvasnica, C.~Jones, and M.~Morari, ``{Multi-Parametric Toolbox
  3.0},'' in {\em Proc.~of the European Control Conference}, (Z\"urich,
  Switzerland), pp.~502--510, July 17--19 2013.
\newblock \url{http://control.ee.ethz.ch/~mpt}.

\bibitem{Andersson2018}
J.~A.~E. Andersson, J.~Gillis, G.~Horn, J.~B. Rawlings, and M.~Diehl,
  ``{CasADi} -- {A} software framework for nonlinear optimization and optimal
  control,'' {\em Mathematical Programming Computation}, In Press, 2018.

\end{thebibliography}

\appendix
\textbf{Proof of Lemma~\ref{lmm:pow2}:} The proof is by induction. The base case is $(a+b)^2 \leq 2 (a^2 + b^2)$ which holds by the Cauchy-Shwartz inequality. Let $(a + b)^{2^k} \leq 2^{2^{k-1}} (a^{2^k} + b^{2^k})$ hold for all $k \in \mathbb{N}$. Then for $k+1$ we have that
\begin{align*}
	(a+b)^{2^{k+1}}&= (a+b)^{2^{2^k}} \leq (2(a^2+b^2 ))^{2^k},\\
	&=2^{2^k} (a^2+b^2)^{2^k} \leq 2^{2^k} 2^{2^{k-1}} ((a^2 )^{2^k}+(b^2 )^{2^k})\\
	&\leq 2^{2^k}(a^{2^{k+1}} + b^{2^{k+1}}).
\end{align*}

\end{document}